\newtheorem{theorem}{Theorem}[section]
\newtheorem{definition}[theorem]{Definition}
\newtheorem{remark}[theorem]{Remark}
\newtheorem{lemma}[theorem]{Lemma}
\newtheorem{claim}{Claim}
\def\K{\mathcal{K}}
\def\Kinf{\K_{\infty}}
\def\KL{\mathcal{KL}}
\def\R{\mathbb{R}}
\def\N{\mathbb{N}}
\def\X{\mathcal{X}}
\def\E{\mathcal{E}}
\def\U{\mathcal{U}}
\def\S{\mathcal{S}}
\def\dfn{:=}
\def\ms{\medspace}
\def\comp{{\scriptstyle\,\circ}\,}
\def\mer{\hfill $\circ$}
\title{State measurement error-to-state stability results\\ based on approximate discrete-time models} %\\ for controller-driven sampling}
\author{A. J. Vallarella and H. Haimovich   
 \thanks{The authors are with Centro Internacional Franco-Argentino de Ciencias de la Informaci\'on y de Sistemas (CIFASIS), CONICET-UNR, Ocampo y Esmeralda, 2000 Rosario, Argentina. {\texttt{\{vallarella,haimovich\}@cifasis-conicet.gov.ar}}}%
\thanks{Work partially supported by ANPCyT grant PICT 2013-0852, Argentina.}%
}
\begin{document}
\maketitle

\begin{abstract}
  Digital controller design for nonlinear systems may be complicated by the fact that an exact discrete-time plant model is not known. One existing approach employs \emph{approximate} discrete-time models for stability analysis and control design, and ensures different types of closed-loop stability properties based on the approximate model and on specific bounds on the mismatch between the exact and approximate models. Although existing conditions for practical stability exist, some of which consider the presence of process disturbances, input-to-state stability with respect to state-measurement errors and based on approximate discrete-time models has not been addressed. In this paper, we thus extend existing results in two main directions: (a) we provide input-to-state stability (ISS)-related results where the input is the state measurement error and (b) our results allow for some specific varying-sampling-rate scenarios. We provide conditions to ensure semiglobal practical ISS, even under some specific forms of varying sampling rate. These conditions employ Lyapunov-like functions. We illustrate the application of our results on numerical examples, where we show that a bounded state-measurement error can cause a semiglobal practically stable system to diverge. 
\end{abstract}

\begin{keywords}
  Sampled-data;
  nonlinear systems;
  non-uniform sampling;
  input-to-state stability (ISS);
  approximate models; 
  measurement errors.
\end{keywords}

\section{Introduction}
\label{sec:introduction}

Digital controller design for nonlinear systems can be substantially more complicated than for linear systems. One of the main obstacles to the design of adequate digital controllers for nonlinear (continuous-time) plants is that even if a continuous-time plant model is known, the corresponding exact discrete-time model, i.e. the model that describes the state evolution at the sampling instants, can be difficult or even impossible to obtain. This happens because the computation of the exact discrete-time model requires knowledge of the solution to
a nonlinear differential equation. 

Interesting existing results address controller design based on an \emph{approximate} discrete-time model of the plant and ensure the stabilization of the original continuous-time plant in a practical sense
\cite{NesicSCL99,NesicTAC04,NesicLaila_TAC2002,NesicAngeliTAC2002}. 
A general framework for stabilization of disturbance-free sampled-data nonlinear
systems via approximate discrete-time models was developed in \cite{NesicSCL99,NesicTAC04} 
with further generalization for continuous-time plants with disturbances in \cite{NesicLaila_CDC2001,NesicLaila_TAC2002,NesicAngeliTAC2002} 
and for observer design in \cite{Arcak20041931}.
All of these approches are specifically suited to uniform-sampling, i.e. the sampling rate is constant during operation.

In many sampled-data control systems, the use of a non-uniform sampling period is necessary, usually imposed by performance requirements of the plant, hardware limitations or network communication constraints. The occurrence of non-uniform sampling is typical of Networked Control Systems (NCSs). The main feature of NCSs is that the different system components exchange data over a communication network. Results for control design based on approximate discrete-time models also exist for different NCS scenarios. In \cite{van2012discrete}, controller design for nonlinear NCSs with time-varying sampling periods, time-varying delays and packet dropouts is considered, based on approximate discrete-time models constructed for nominal sampling period and delays. All of the results mentioned so far ensure practical stability properties for sufficiently small sampling periods, with ultimate bounds of decreasing size in correspondence with decreasing maximum sampling periods.

Related but conceptually different results exist for dual- or multi-rate sampling, where some constraint exists on how fast measurements can be taken \cite{polushin2004multirate, liu2008input, beikzadeh2015multirate}. These results give closed-loop practical stability warranties and employ approximate discrete-time models in order to predict the state evolution at the control update instants, since new measurements may not be available at each of these instants. Specifically, \cite{beikzadeh2015multirate} proposes a multi-rate sampled-data scheme to stabilize a NCS via output feedback using discrete-time approximations. The results of \cite{beikzadeh2015multirate} as well as those of \cite{van2012discrete} give practical stability and robustness warranties for deviations about a nominal situation.  

Non-uniform sampling is also a feature of event-triggered control, where a triggering condition based on a continuous measurement of system variables (e.g. the system state) determines 
when the control action has to be updated \cite{tabuad_tac07,Heemels2012}.
Strategies that also involve non-uniform sampling but require only 
sampled measurements are those of self-triggered control \cite{velfue_rtss03,Anta2010,Heemels2012}.
Self-triggered control computes both the control action and the next sampling instant at which the control action should be recomputed.
In both event-triggered and self-triggered control, we may thus say that sampling is \emph{controller-driven}, since the controller itself is in charge of computing the next sampling instant.

The objective of this paper is to provide practical stability results based on approximate discrete-time models, and allowing some specific non-uniform sampling scenarios. Our results are, however, independent of the mechanism employed to vary the sampling rate. In this context, the main contribution of the current paper is to provide conditions on the approximate discrete-time plant model and the control law in order to ensure closed-loop semiglobal-practical input-to-state stability (see Section~\ref{sec:preliminaries2} for the precise definition), where the `input' is the state measurement error. 

Our results are novel even for the uniform sampling case. The fact that state measurement errors be considered causes the analysis to become substantially different from that of disturbances affecting the plant dynamics (as in \cite{NesicLaila_TAC2002}). This is because if based on a perturbed state measurement, the control action will itself have some error,
and this in turn will cause some intersample error additional to that introduced by the approximation in the discrete-time model. More specific differences between the given and existing results will be explained along the paper. The  fact that some specific varying-sampling-rate scenarios be covered increases the applicability of our results. Related preliminary results dealing with closed-loop stability possibly under controller-driven sampling (without measurement errors) have been given in \cite{AADECA16,RPIC2017}.

The organization of this paper is as follows. This section ends with a brief summary of the notation employed throughout the paper. In Section~\ref{sec:preliminaries} we state the problem and the required definitions and properties. Our main results are given in Section~\ref{sec:main}. An illustrative example is provided in Section~\ref{sec:example} and concluding remarks are presented in Section~\ref{sec:conclusions}. The appendix contains the proofs of some intermediate technical points.

\textbf{Notation:}
$\R$, $\R_{\ge 0}$, $\N$ 
and $\N_0$ denote the sets of real, nonnegative real, natural and nonnegative integer numbers, respectively. A function $\alpha : \R_{\ge 0} \to \R_{\ge 0}$ is of class-$\K$ (we write $\alpha \in \K$) if it is strictly increasing, continuous and $\alpha(0)=0$. It is class-$\K_{\infty}$ ($\alpha \in \Kinf$) if in addition $\alpha(s) \rightarrow \infty$ as $s \rightarrow \infty$. 
A function $\beta : \R_{\ge 0} \times \R_{\ge 0} \to \R_{\ge 0}$ is of class-$\KL$ ($\beta \in \KL$) if $\beta(\cdot,t)\in \K$ for all $t\ge 0$, and $\beta(s,\cdot)$ is strictly decreasing asymptotically to $0$ for every $s$.
We  denote the identity function by id.
We denote the Euclidean norm of a vector $x \in \R^n$ by $|x|$. Given a set $\X \subset \R^n$ we denote its $\epsilon$ neighborhood as $N(\X,\epsilon):= \{x: \inf_{y\in \X} |x-y|\leq \epsilon \}$. 
We denote an infinite sequence as $\{T_i\}:=\{T_i\}_{i=0}^{\infty}$. For any sequences $\{T_i\} \subset \R_{\ge 0}$ and $\{e_i\} \subset \R^m$, and any $\gamma\in\K$, we take the following conventions: $\sum_{i=0}^{-1} T_i = 0$ and $\gamma(\sup_{0\le i\le -1}|e_i|) = 0$. 
Given a real number $T>0$ we denote by $\Phi(T):=\{ \{T_i\} : \{T_i\} \subset (0,T) \}$ the set of all sequences of real numbers in the open interval $(0,T)$. 
For a given sequence we denote the norm $\|\{x_i\}\|:= \sup_{i\geq0} |x_i|$.

\section{Preliminaries}
\label{sec:preliminaries}

\subsection{Problem statement}
\label{sec:problem_statement}
We consider the nonlinear continuous-time plant
\begin{equation}
\label{eq:cs}
\begin{split}
 \dot{x}&=f(x,u),\quad  x(0)=x_0, \\
\end{split}
\end{equation}
where $x(t) \in \R^n$, $u(t) \in \R^m$ 
are the state and control vectors respectively and $f(0,0)=0$.
As in \cite{NesicSCL99}, the function $f : \R^n \times \R^m \to \R^n$ is assumed to be such that for each initial condition and constant control, there exists
a unique solution on some interval $[0,\tau)$ with $0 < \tau \leq \infty$.

We consider that state measurements become available at time instants $t_k$, $k\in \N_0$, that satisfy $t_0 = 0$ and $t_{k+1} = t_k + T_k$,  where $\{T_k\}_{k=0}^{\infty}$ is the sequence of corresponding sampling periods. As opposed to the uniform sampling case where $T_k = T$ for all $k \in \N_0$, the sampling periods $T_k$ may vary. We refer to  this situation as Varying Sampling Rate (VSR). 
The control signal is assumed to be  piecewise constant  (i.e. zero-order hold is present) such that $u(t) = u(t_k) =: u_k$ for all $t\in [t_k,t_{k+1})$.
We denote the state at the sampling instants by $x_k := x(t_k)$. The control action $u_k$ may depend on the state measurement $\hat x_k = x_k + e_k$, with $e_k$ the state measurement error, and also on the sampling period $T_k$, i.e. we have $u_k = U(\hat x_k, T_k)$. The dependence of the control action at time $t_k$, on the sampling period $T_k = t_{k+1} - t_k$ is possible in the following situations: (a) uniform sampling; (b) controller-driven sampling (such as self-triggered control \cite{velfue_rtss03}, or others \cite{haimoose_nahs12}). In the situation (a), the constant sampling period employed can be known before the controller is implemented. This is the setting in \cite{NesicTAC04,NesicAngeliTAC2002,NesicLaila_TAC2002}. In (b), at each sampling instant $t_k$, the controller may select the next sampling instant $t_{k+1}$ and, thus, the current sampling period $T_k$. As a consequence, knowledge of $T_k$ can be employed in order to compute the current control action $u_k$. However, our results will be still valid in the particular case $u_k = \bar U(\hat x_k)$, where the control law does not explicitly depend on the sampling period. In addition, our results are independent of the way in which the sampling periods $T_k$ may vary over time. Fig.~\ref{fig:diagram} depicts the complete closed-loop system.
\begin{figure}[htb]
  \begin{center}
    \includegraphics[width=75mm,scale=1]{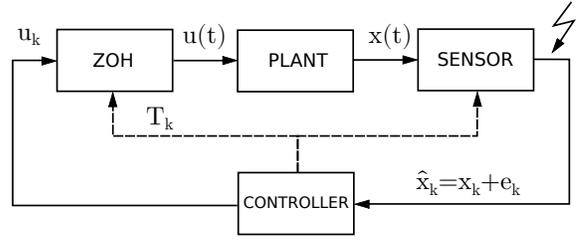} \\
    \caption{Schematic diagram of the controlled plant}
        \label{fig:diagram}
  \end{center}
\end{figure}

The exact discrete-time model for a given nonlinear system is the discrete-time system whose state matches the state of the continuous-time system at every sampling instant. From~(\ref{eq:cs}), and given that the plant input $u$ is held constant over each sampling interval, it is clear that the state value at the next sampling instant, namely $x_{k+1}$, will depend on the current state value $x_k$, the current input value $u_k$, and the sampling period $T_k$. Hence, a discrete-time model for the case of zero-order hold and VSR can be written as
\begin{align}
  \label{eq:fte}
  x_{k+1} &= F^e(x_k,u_k,T_k),
\end{align}
where the superscript $^e$ stands for ``exact''. Due to the fact that the solution to (\ref{eq:cs}) does not necessarily exist for all future times, then the exact discrete-time model $F^e(\cdot,\cdot,\cdot)$ may be not defined for every $(x,u,T) \in \R^n \times \R^m \times \R_{\ge 0}$. However, we assume that $f$ in (\ref{eq:cs}) has sufficient regularity so that for every pair of compact sets $\X \subset \R^n$ and $\U\subset \R^m$, there exists some $T^* > 0$ so that $F^e(x,u,T)$ is defined for every $(x,u,T)\in \X\times\U\times [0,T^*)$. The exact model $F^e$ is in general very hard or even impossible to obtain and thus some type of approximation may be necessary for the design of a stabilizing controller. A superscript $^a$ will be used to denote an approximate model:
\begin{align}
  \label{eq:fta}
  x_{k+1} &= F^a(x_k,u_k,T_k).
\end{align}
We will use `$F$' with no superscript to denote any discrete-time model (either exact or approximate). Under the feedback law $u_k = U(\hat x_k,T_k) = U(x_k + e_k,T_k)$, the closed-loop discrete-time model becomes 
\begin{align}
 \label{eq:system1}
x_{k+1} &= F(x_k,U(x_k+e_k,T_k),T_k) =: \bar F(x_k,e_k,T_k),
\end{align}
where the measurement error $e_k$ (disturbance) is regarded as an input. Hence, given a control law/discrete-time model pair $(U,F)$, the quantity $\bar F$ will denote the composition of $F$ and $U$ as shown in (\ref{eq:system1}). Whenever required, we will write $x(k,\xi,\{e_i\},\{T_i\})$ to denote the solution of (\ref{eq:system1}) at time $k$ with initial condition $\xi$ at $k=0$, and corresponding to the measurement error and sampling period sequences $\{e_i\}$ and $\{T_i\}$.

\subsection{Stability properties for varying sampling rate}
\label{sec:preliminaries2}

Our objective is to provide conditions that ensure stability properties of the exact discrete-time closed-loop system under the varying-sampling-rate case.
% == 
The following definition
constitute a natural extension of the 
semiglobal practical ISS \cite{sontag_tac89,Sontag95,NesicLaila_TAC2002} property. 

%======================================
% DEFINITION
%======================================
\begin{definition} 
  \label{def:ISS}
  The system (\ref{eq:system1}) is said to be
    \label{def:SP-ISS-VSR}
    \textit{Semiglobal Practical ISS-VSR} (SP-ISS-VSR) if there exist functions $\beta \in \mathcal{KL}$ and $\gamma \in \K_\infty$ such that for every $M>0$, $E>0$ and $R>0$ there exists $T^\star>0$ such that the solutions of \eqref{eq:system1} satisfy
    \begin{align}
      \label{eq:SP-ISS-VSR}
      |x_k|\leq \beta\left(|x_0|,\sum_{i=0}^{k-1}T_i\right)+\gamma\left(\sup_{0\le i \le k-1}|e_i| \right )+R,
    \end{align}
    for all\footnote{As explained under ``Notation'' in Section~\ref{sec:introduction}, for $k=0$ we interpret $\sum_{i=0}^{-1} T_i = 0$ and $\gamma(\sup_{0\le i \le -1} |e_i| )=0$.} $k\in \N_0$, all $\{T_i\} \in \Phi(T^\star)$, all $|x_0|\leq M$ and all $\|\{e_i\}\|\leq E$. 
    The function $\bar F$ is said to be SP-ISS-VSR if the system~(\ref{eq:system1}) defined by $\bar F$ is SP-ISS-VSR.
\end{definition}

\begin{remark}
  Note that the sequences of constant sampling periods $\{T_i\}$ such that $T_i = T > 0$ for all $i\in\N_0$ are included in the set $\Phi(T^\star)$ whenever $T^\star > T$. This implies that if any of the above properties holds for a certain system
then the respective properties for the uniform sampling case also hold.
\end{remark}

\begin{remark}
  Setting $k=0$ in (\ref{eq:SP-ISS-VSR}), it follows that $|x_0| \le \beta(|x_0|,0) + R$ holds for every $x_0 \in \R^n$ and every $R>0$. As a consequence, any function $\beta\in\KL$ characterizing SP-ISS-VSR has the additional property that $r\le \beta(r,0)$ for all $r\ge 0$. This property will be repeatedly employed for the proof of our results.
\end{remark}

\subsection{Model consistency with non-ideal state measurements}
\label{sec:non-ideal_measurements}

The next definition, as appears in \cite[Definition~1]{Arcak20041931}, ensures that the mismatch between the solutions of the exact and approximate systems over one sampling period is bounded by a value that depends on the sampling period, uniformly over states and inputs in compact sets. This bound tends to zero in a specific way as the sampling period becomes smaller.
%
%======================================
% DEFINITION
%======================================
\begin{definition}
\label{def:consistent} 
The function $F^a$ is said to be consistent with $F^e$
if for each compact set $\Omega \subset \R^n \times \R^m$, there exist
$\rho \in \K$ and $T_0>0$ such that, for all $(x,u)\in \Omega$ and all $T \in [0,T_0]$,
\begin{align}
\label{eq:consistentIneq} 
 |F^e(x,u,T)-F^a(x,u,T)|\leq T \rho(T).
\end{align}
\end{definition}
Definition~\ref{def:consistent} is similar to the definition of one-step consistency in \cite[Defintion~1]{NesicSCL99}. However, the main difference lies in the fact that Definition~\ref{def:consistent} is ``open-loop'' whereas Definition~1 of \cite{NesicSCL99} involves the feedback law. In the current setting of imperfect state knowledge, consideration of the feedback law would be meaningless unless the state measurement error be also considered (cf. Defintion~\ref{def:MSEC} below). 

The following lemma is a minor modification of Lemma~1 of \cite{NesicSCL99} that gives sufficient conditions for the consistency of $F^a$ with $F^e$, based on $f$ and $F^a$, and hence without requiring knowledge of $F^e$. The required modification is due to the fact that in the current setting the feedback law cannot assume perfect knowledge of the state. The proof closely follows the proof of Lemma~1 of \cite{NesicSCL99} but is given in the Appendix for completeness.
%
%================================================================================
% LEMMA
%================================================================================
\begin{lemma}
  \label{lemma:sufcondfor_consistency_f}
  Suppose that
  \begin{enumerate}[i)]
  \item $F^a$ is consistent with $F^{\text{Euler}}$, where $F^{\text{Euler}}(x,u,T) := x + T f(x,u)$, and
    \label{item:lemma_sufcondfor_consistency_f_Euler}
  \item for every pair of compact sets $\X \subset \R^n$ and $\U \subset \R^m$, there exist $\rho \in \K$ and $M>0$ such that for all $(x,u), (y,u) \in \X \times \U$, \label{item:fbounded}
    \begin{enumerate}[a)]
    \item $|f(y,u)|\leq M$, \label{subitem:f_bounded_a}
    \item $|f(y,u)-f(x,u)|\leq \rho(|y-x|)$. \label{subitem:f_Lypschitz_b}
    \end{enumerate}
  \end{enumerate}
  Then, $F^a$ is consistent with $F^e$.
\end{lemma}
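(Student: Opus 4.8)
The plan is to estimate the one-step mismatch $|F^e(x,u,T)-F^a(x,u,T)|$ by inserting the Euler model as an intermediate term and applying the triangle inequality:
\[
|F^e(x,u,T)-F^a(x,u,T)| \le |F^e(x,u,T)-F^{\text{Euler}}(x,u,T)| + |F^{\text{Euler}}(x,u,T)-F^a(x,u,T)|.
\]
The second term is controlled directly by hypothesis~\ref{item:lemma_sufcondfor_consistency_f_Euler}): given the compact set $\Omega = \X\times\U$, there are $\rho_1\in\K$ and $T_1>0$ with that difference bounded by $T\rho_1(T)$ for $T\in[0,T_1]$. The work is therefore in bounding the first term, i.e.\ showing that the true flow stays $O(T)\cdot\rho(T)$ away from one Euler step, uniformly over $(x,u)$ in the compact set.

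First I would fix compact sets $\X\subset\R^n$, $\U\subset\R^m$ and obtain from hypothesis~\ref{item:fbounded}) the constant $M$ and the modulus $\rho_2\in\K$ associated with a slightly enlarged compact set $N(\X,\delta)\times\U$ for some $\delta>0$; this enlargement is needed so that the exact solution $x(s)=x(s,x,u)$, which may leave $\X$, still lies in a region where the bounds~\ref{subitem:f_bounded_a}) and~\ref{subitem:f_Lypschitz_b}) apply. Since $|f|\le M$ on this set, for $T$ small enough (say $T\le\delta/M$) the solution starting at $x\in\X$ with constant input $u$ satisfies $|x(s)-x|\le Ms\le MT\le\delta$ for all $s\in[0,T]$, so it does not escape $N(\X,\delta)$; in particular $F^e(x,u,T)$ is defined. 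Then, writing
\[
F^e(x,u,T)-F^{\text{Euler}}(x,u,T) = x(T)-x-Tf(x,u) = \int_0^T\bigl(f(x(s),u)-f(x,u)\bigr)\,ds,
\]
I would bound the integrand using~\ref{subitem:f_Lypschitz_b}) and the displacement estimate: $|f(x(s),u)-f(x,u)|\le\rho_2(|x(s)-x|)\le\rho_2(Ms)\le\rho_2(MT)$, the last step by monotonicity of $\rho_2$. Hence $|F^e(x,u,T)-F^{\text{Euler}}(x,u,T)|\le T\rho_2(MT)$, and the function $T\mapsto\rho_2(MT)$ is of class $\K$.

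Combining, for $T\le T_0:=\min\{T_1,\delta/M\}$ and all $(x,u)\in\X\times\U$ we get $|F^e(x,u,T)-F^a(x,u,T)|\le T\bigl(\rho_1(T)+\rho_2(MT)\bigr)$, and setting $\rho(T):=\rho_1(T)+\rho_2(MT)\in\K$ gives exactly the consistency inequality~\eqref{eq:consistentIneq}. Since $\X$ and $\U$ were arbitrary (and any compact $\Omega\subset\R^n\times\R^m$ is contained in a product of compact sets), this establishes consistency of $F^a$ with $F^e$. The main obstacle is the escape-of-solutions issue: one must choose the enlarged compact set and the threshold on $T$ carefully so that the exact solution remains where the hypotheses on $f$ are valid and so that $F^e$ is guaranteed to be defined; once the a~priori displacement bound $|x(s)-x|\le Ms$ is in hand, the rest is a routine Gr\"onwall-free integral estimate.
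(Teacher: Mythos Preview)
Your proposal is correct and follows essentially the same approach as the paper's proof: split via the triangle inequality with $F^{\text{Euler}}$ as intermediate, enlarge the state compact set so the exact flow stays inside it for small $T$ (the paper simply takes $\delta=1$), use the bound $|f|\le M$ to get $|x(s)-x|\le Ms$, then apply the modulus of continuity to bound the integral by $T\rho_2(MT)$ and combine with the $F^a$--Euler consistency bound. The only cosmetic difference is that the paper starts from an arbitrary compact $\Omega$ and encloses it in a product $\X_1\times\U$, whereas you start from a product and remark at the end that this suffices.
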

The following definition extends the definition of multi-step consistency in \cite{NesicSCL99} to the varying-sampling-rate scenario considered and to the case of state measurement errors. 

\begin{definition}
  \label{def:MSEC}
  The pair $(U,F^a)$ is said to be Multi-Step Error Consistent (MSEC) with $(U,F^e)$ if, for each $L>0$, $\eta>0$ and compact sets $\X \subset \R^n$ and $\E \subset \R^n$, there exist a function $\alpha: \R_{\geq0} \times \R_{\geq0} \rightarrow \R_{\geq0}$ and $T^*>0$ such that
  \begin{align}
    \label{eq:mscee1}
    |\bar F^e(x^e,e,T) - \bar F^a(x^a,e,T)| &\leq \alpha(\delta,T) 
  \end{align}
  for all $x^e,x^a \in \X$ satisfying $|x^e-x^a|\leq \delta$, all $e \in \E$, and all $T\in(0,T^*)$, and for all $\{T_i\} \in \Phi(T^*)$ and $k \in \N$ such that $\sum_{i=0}^{k-1}T_i \leq L$ we have
  \begin{align}
    \label{eq:mscee2}
    \alpha^k(0,\{T_i\}) &:= \overbrace{\alpha(\hdots \alpha(\alpha(}^{k}0,T_0),T_{1}),\hdots,T_{k-1}) \leq \eta.
  \end{align}
\end{definition}
The next lemma constitutes the corresponding extension of \cite[Lemma~2]{NesicSCL99} and shows that under MSEC, the error between the approximate and exact solutions over a fixed time period is reduced by making the maximum sampling period smaller. The proof is given in the Appendix for the sake of completeness.
\begin{lemma}
  \label{LEMA3}
  If $(U,F^a)$ is MSEC with $(U,F^e)$ then for each compact sets $\X \subset \R^n$ and $\E \subset \R^n$, and constants $L>0$ and $\eta>0$, there exists $\tilde T>0$ such that, if $\{T_i\}$, $\{e_i\}$ and $\xi$ satisfy
  \begin{align}
    \label{eq:MSEC1}
    \{T_i\} &\in \Phi(\tilde T), &x^a(k,\xi,\{e_i\},\{T_i\}) &\in \X, & e_k &\in \E, 
  \end{align}
  for all $k$ for which $\sum_{i=0}^{k-1} T_i \in [0,L]$, then, for these values of $k$
  \begin{align}
    \label{eq:MSEC2}
    |\Delta x_k| &:= |x^e(k,\xi,\{e_i\},\{T_i\})-x^a(k,\xi,\{e_i\},\{T_i\})|\leq \eta. 
  \end{align}
\end{lemma}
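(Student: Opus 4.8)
The plan is to run an induction on the sampling index $k$, propagating the one-step estimate of Definition~\ref{def:MSEC} along the trajectory. The point that needs care is that \eqref{eq:mscee1} bounds $|\bar F^e(x^e,e,T)-\bar F^a(x^a,e,T)|$ only when \emph{both} arguments lie in the prescribed compact state set, whereas the hypotheses of the lemma place only the approximate trajectory $x^a(k,\xi,\{e_i\},\{T_i\})$ in $\X$ --- the exact trajectory is, a priori, merely close to it. I would therefore first fatten the state set: set $\X_\eta \dfn N(\X,\eta)$, which is compact, and \emph{apply} the MSEC property to the data $(L,\eta,\X_\eta,\E)$, obtaining a function $\alpha$ and a number $T^\ast>0$ for which \eqref{eq:mscee1} holds with $\X$ replaced by $\X_\eta$ (so that, in particular, $\bar F^e$ is defined on $\X_\eta\times\E\times(0,T^\ast)$) and \eqref{eq:mscee2} holds. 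Then put $\tilde T \dfn T^\ast$ and fix arbitrary $\{T_i\}\in\Phi(\tilde T)$, $\{e_i\}$ and $\xi$ satisfying \eqref{eq:MSEC1}. In what follows I abbreviate the argument list $(k,\xi,\{e_i\},\{T_i\})$ simply by writing $x^e(k,\cdot)$ and $x^a(k,\cdot)$.

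The heart of the argument is the claim that $|\Delta x_k|\le \alpha^k(0,\{T_i\})$ for every $k$ with $\sum_{i=0}^{k-1}T_i\le L$, where we read $\alpha^0(0,\{T_i\})\dfn 0$ in accordance with the conventions in force. This I would prove by induction on $k$. For $k=0$ it is immediate, since $x^e$ and $x^a$ share the initial condition $\xi$, so $\Delta x_0=0$. For the step, assume the bound at stage $k$ and that $\sum_{i=0}^{k}T_i\le L$; then also $\sum_{i=0}^{k-1}T_i\le L$, so by hypothesis $x^a(k,\cdot)\in\X$ and $e_k\in\E$, and by \eqref{eq:mscee2} we have $|\Delta x_k|\le\alpha^k(0,\{T_i\})\le\eta$. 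Consequently $x^e(k,\cdot)\in N\bigl(x^a(k,\cdot),\eta\bigr)\subset\X_\eta$ and $x^a(k,\cdot)\in\X\subset\X_\eta$, while $T_k\in(0,T^\ast)$. The key device is now to invoke \eqref{eq:mscee1} not with $\delta$ equal to the exact deviation $|\Delta x_k|$ but with $\delta\dfn\alpha^k(0,\{T_i\})$, which is legitimate because $|x^e(k,\cdot)-x^a(k,\cdot)|=|\Delta x_k|\le\alpha^k(0,\{T_i\})=\delta$. Since $x^e(k+1,\cdot)=\bar F^e(x^e(k,\cdot),e_k,T_k)$ and $x^a(k+1,\cdot)=\bar F^a(x^a(k,\cdot),e_k,T_k)$, \eqref{eq:mscee1} gives $|\Delta x_{k+1}|\le\alpha(\delta,T_k)=\alpha(\alpha^k(0,\{T_i\}),T_k)=\alpha^{k+1}(0,\{T_i\})$, closing the induction. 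Feeding in the \emph{a priori} bound $\alpha^k(0,\{T_i\})$ rather than $|\Delta x_k|$ itself is precisely what lets the proof go through without assuming any monotonicity of $\alpha$ in its first argument.

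With the claim in hand, a last appeal to \eqref{eq:mscee2} yields, for every $k$ with $\sum_{i=0}^{k-1}T_i\in[0,L]$, the desired estimate $|\Delta x_k|\le\alpha^k(0,\{T_i\})\le\eta$, i.e.\ \eqref{eq:MSEC2}. I expect the two points calling for a little bookkeeping to be: (i) the $\X_\eta$ fattening together with the choice of $\delta$ in the inductive step, which are the substantive moves; and (ii) confirming that the exact trajectory $x^e(k,\cdot)$ is well defined up to the relevant horizon --- which follows from the induction itself, since at each stage $x^e(k,\cdot)$ is shown to lie in $\X_\eta$, and then $e_k\in\E$, $T_k\in(0,T^\ast)$ and the availability of $\bar F^e$ on $\X_\eta\times\E\times(0,T^\ast)$ keep the next exact step defined. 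The main obstacle, insofar as there is one, is simply recognizing that the one-step bound must be run on the $\eta$-enlarged set and then closed using the running upper bound rather than the instantaneous error.
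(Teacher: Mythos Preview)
Your argument is correct and follows the same line as the paper's proof: enlarge the state set to $N(\X,\eta)$, apply the MSEC data on this fattened set, and run an induction on $k$ using \eqref{eq:mscee1}--\eqref{eq:mscee2}. Your explicit choice of $\delta=\alpha^k(0,\{T_i\})$ in the inductive step (rather than $\delta=|\Delta x_k|$) is in fact slightly cleaner than the paper's chain $|\Delta x_{k+1}|\le\alpha(|\Delta x_k|,T_k)\le\alpha^{k+1}(0,\{T_i\})$, since it sidesteps any implicit appeal to monotonicity of $\alpha$ in its first argument.
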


The next lemma gives a sufficient condition for MSEC. This lemma can be regarded as the corresponding extension of \cite[Lemma 3]{NesicSCL99}.
%
%======================================
% LEMMA 3
%======================================
\begin{lemma}
  \label{lem:uFaMSEC}
  If, for each compact sets $\X\subset \R^n$ and $ \E \subset \R^n$ there 
  exist $\rho_0 \in \K$, a nondecreasing function $\sigma: \R_{\geq 0} \rightarrow \R_{\geq 0}$ and $T^* >0$ such that
  for all $T\in(0,T^*)$ and all $x,y \in \X$, $e \in  \E$ we have
    \begin{align}
      \label{eq:FeFaIneq}
      |\bar F^e(x,e,T) - \bar F^a(y,e,T)| &\leq T\rho_0(T) + (1 + T\sigma(T) ) |x-y| 
    \end{align}
  then, $(U,F^a)$ is MSEC with $(U,F^e)$.
\end{lemma}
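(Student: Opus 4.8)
The plan is to verify the two requirements of Definition~\ref{def:MSEC} directly from the one-step bound~\eqref{eq:FeFaIneq}. Given $L>0$, $\eta>0$, and compact sets $\X,\E$, I would first fix the function $\rho_0\in\K$, the nondecreasing $\sigma$, and $T^*>0$ that inequality~\eqref{eq:FeFaIneq} provides for these sets. The natural candidate for the comparison function in~\eqref{eq:mscee1} is
\begin{align}
  \label{eq:alpha_def}
  \alpha(\delta,T) := T\rho_0(T) + \bigl(1 + T\sigma(T)\bigr)\delta,
\end{align}
so that~\eqref{eq:mscee1} is immediate from~\eqref{eq:FeFaIneq}: if $|x^e-x^a|\le\delta$ then $|\bar F^e(x^e,e,T)-\bar F^a(x^a,e,T)|\le T\rho_0(T)+(1+T\sigma(T))\delta = \alpha(\delta,T)$, for all such $x^e,x^a\in\X$, all $e\in\E$, all $T\in(0,T^*)$. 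Note $\alpha$ is nondecreasing in $\delta$, which is what makes the iteration monotone.

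The substantive part is to bound the iterated composition $\alpha^k(0,\{T_i\})$ in~\eqref{eq:mscee2} and show it can be forced below $\eta$ by shrinking the maximum sampling period. Unrolling the recursion with $\delta_0=0$ and $\delta_{j+1}=\alpha(\delta_j,T_j)=T_j\rho_0(T_j)+(1+T_j\sigma(T_j))\delta_j$ gives, by a standard telescoping/Gronwall-type argument for discrete linear-in-$\delta$ recursions,
\begin{align}
  \label{eq:alpha_iterate}
  \alpha^k(0,\{T_i\}) = \sum_{j=0}^{k-1} T_j\rho_0(T_j)\prod_{\ell=j+1}^{k-1}\bigl(1+T_\ell\sigma(T_\ell)\bigr).
\end{align}
Now restrict to $\{T_i\}\in\Phi(\bar T)$ for some $\bar T\le T^*$ with $\sum_{i=0}^{k-1}T_i\le L$. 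Using $1+T_\ell\sigma(T_\ell)\le e^{T_\ell\sigma(T_\ell)}\le e^{T_\ell\sigma(\bar T)}$ (monotonicity of $\sigma$), the product over $\ell=j+1,\dots,k-1$ is at most $e^{\sigma(\bar T)\sum_\ell T_\ell}\le e^{\sigma(\bar T)L}$; and $\rho_0(T_j)\le\rho_0(\bar T)$ since $\rho_0\in\K$. Hence
\begin{align}
  \label{eq:alpha_final_bound}
  \alpha^k(0,\{T_i\}) \le \rho_0(\bar T)\,e^{\sigma(\bar T)L}\sum_{j=0}^{k-1}T_j \le L\,\rho_0(\bar T)\,e^{\sigma(\bar T)L}.
\end{align}
As $\bar T\to 0^+$ we have $\rho_0(\bar T)\to 0$ while $e^{\sigma(\bar T)L}\le e^{\sigma(T^*)L}$ stays bounded (using that $\sigma$ is nondecreasing, so $\sigma(\bar T)\le\sigma(T^*)$), so the right-hand side tends to $0$. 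Therefore there exists $T^{**}\in(0,T^*]$ such that $L\rho_0(T^{**})e^{\sigma(T^{**})L}\le\eta$; taking $\bar T=T^{**}$ as the $T^*$ in Definition~\ref{def:MSEC} makes~\eqref{eq:mscee2} hold for every $\{T_i\}\in\Phi(T^{**})$ and every $k$ with $\sum_{i=0}^{k-1}T_i\le L$. This establishes MSEC.

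The main obstacle I anticipate is getting the closed form~\eqref{eq:alpha_iterate} and the subsequent estimate~\eqref{eq:alpha_final_bound} stated cleanly — in particular handling the empty-product convention for the last term ($j=k-1$), and being careful that the bound $\sum_j T_j\le L$ is exactly the hypothesis available, so that no separate control on $k$ is needed. A minor subtlety is that $\alpha$ in~\eqref{eq:alpha_def} is defined on all of $\R_{\ge0}\times\R_{\ge0}$ as required by Definition~\ref{def:MSEC}, even though the bound~\eqref{eq:FeFaIneq} is only invoked for arguments in the relevant compact/bounded ranges; since~\eqref{eq:mscee2} only evaluates $\alpha$ at iterates starting from $0$ with $T_i<T^{**}$, and these iterates stay bounded by~\eqref{eq:alpha_final_bound}, everything is consistent. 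Everything else is routine.
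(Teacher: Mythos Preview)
Your proposal is correct and follows essentially the same route as the paper: define $\alpha(\delta,T)=T\rho_0(T)+(1+T\sigma(T))\delta$, unroll the recursion to obtain the sum-of-products formula, and bound it via $1+a\le e^a$ and monotonicity of $\rho_0,\sigma$ to get $L\rho_0(\bar T)e^{\sigma(\bar T)L}$, which is driven below $\eta$ by shrinking $\bar T$. The only cosmetic difference is that the paper invokes the hypothesis on the enlarged set $\tilde\X:=N(\X,\eta)$ rather than on $\X$ itself; since Definition~\ref{def:MSEC} only requires \eqref{eq:mscee1} for $x^e,x^a\in\X$, your direct use of $\X$ is perfectly adequate.
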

%
%======================================
% PROOF
%======================================
\begin{proof}
  Let $L>0$ and $\eta>0$, and let $\X \subset \R^n$ and $\E \subset \R^n$ be compact sets. Define the compact set $\tilde \X:=N(\X,\eta)$ and let $\tilde\X$  and $\E$ generate $\rho_0\in \K$, $\sigma$ and $T^* >0$. Select $\hat T >0$ sufficiently small so that $\hat T \leq T^*$ and $e^{\sigma(\hat T)L}\rho_0(\hat T)L \leq \eta$. Define
  \begin{equation}
    \label{eq:defalpha}
    \alpha(\delta,T)\dfn T\rho_0(T)+(1+T\sigma(T))\delta.
  \end{equation}
  Let $x,y \in \X$ be such that $|x-y|\le \delta$, let $e\in\E$, and $T\in (0,\hat T)$. From \eqref{eq:FeFaIneq} and (\ref{eq:defalpha}), then \eqref{eq:mscee1} is satisfied. Consider $\{T_i\} \in \Phi(\hat T)$. Using (\ref{eq:defalpha}), we may recursively define $\eta_0 := 0$ and, for $k\ge 1$,
  \begin{align*}
    \eta_k &:= \alpha^k(0,\{T_i\}) =  T_{k-1}\rho_0(T_{k-1})+(1+T_{k-1} \sigma(T_{k-1})) \eta_{k-1}
  \end{align*}
  Recursively employing the latter formula, we may write, for $k\ge 1$ and such that $\sum_{i=0}^{k-1} T_i \le L$,
  \begin{align*}
    \eta_k &= T_{k-1}\rho_0(T_{k-1}) + 
             \sum_{j=0}^{k-2} T_j \rho_0(T_j) \prod_{i=j+1}^{k-1} (1+T_i \sigma(T_i)) \\
           &\leq T_{k-1}\rho_0(T_{k-1}) +
             \sum_{j=0}^{k-2} T_j \rho_0(T_j)  e^{\sigma(\hat T) \sum_{j+1}^{k-1} T_i} \\
           &\leq T_{k-1}\rho_0(T_{k-1}) +
             e^{\sigma(\hat T)L} \sum_{j=0}^{k-2} T_j \rho_0(T_j)  \\
           &\leq e^{\sigma(\hat T)L} \sum_{j=0}^{k-1} T_j \rho_0(T_j) 
             \leq e^{\sigma(\hat T)L}\rho_0(\hat T) \sum_{j=0}^{k-1} T_j  \\
           &\leq e^{\sigma(\hat T)L}\rho_0(\hat T) L \le \eta,
  \end{align*}
  where we have used the facts that for every $a>0$, $1+a \le e^a$, and $T_i < \hat T$ for every $i\in\N_0$.
  This shows that $(U,F^a)$ is MSEC with $(U,F^e)$.
\end{proof}

For the sake of completeness, 
we next state the definition of 
locally uniformly bounded control law as in \cite[Defintion~II.4]{NesicLaila_TAC2002}.
\begin{definition}
$U(x,T)$ is said to be \textit{locally uniformly bounded}
if for every $M>0$ there exist $T^*=T^*(M)>0$ and $C=C(M)>0$ such that
$|U(x,T)|\leq C$ for all $|x| \leq M$ and $T\in(0,T^*)$.
\label{def:localunibounded}
\end{definition}

Lemma~\ref{lemma:sufcondconsistency} gives sufficient conditions for MSEC based on consistency (as per Definition~\ref{def:consistent}) and properties of the control law and approximate model. 
\begin{lemma}
  \label{lemma:sufcondconsistency}
  Suppose that 
  \begin{enumerate}[i)]
  \item $F^a$ is consistent with $F^e$ as per Definition \ref{def:consistent}.
    \label{item:sufcondconsistency_consistentEuler}
 \item $U(x,T)$ is locally uniformly bounded as per Definition \ref{def:localunibounded}.
    \label{item:sufcondconsistency_controllocbound}
  \item \label{item:sufcondconsistency_Fa_nondecreasing_bound}
    For each compact sets $\X, \mathcal{E} \subset \R^n$ there exist a nondecreasing function  $\sigma: \R_{\geq 0} \rightarrow \R_{\geq 0}$ and $T_1>0$ such that for all $x,z\in \X$, $e\in  \mathcal{E}$ and $T\in (0,T_1)$ we have
    \begin{align}
      \label{eq:lemma_sufcondconsistency_Fa_nondecreasing}
      |\bar F^a(x,e,T) - \bar F^a(z,e,T)| &\leq (1+T\sigma (T))|x-z|.
    \end{align} 
  \end{enumerate}
  Then, $(U,F^a)$ is MSEC with $(U,F^e)$.
\end{lemma}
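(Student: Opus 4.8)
The plan is to reduce the claim to Lemma~\ref{lem:uFaMSEC} by verifying its hypothesis~\eqref{eq:FeFaIneq} for arbitrary compact sets. So fix compact sets $\X \subset \R^n$ and $\E \subset \R^n$. The key preliminary observation is that as $x$ ranges over $\X$ and $e$ over $\E$, the argument $x+e$ of the control law ranges over the compact set $\X+\E$, so there is $M>0$ with $|x+e|\le M$ for all such $x,e$. By hypothesis~\ref{item:sufcondconsistency_controllocbound} (local uniform boundedness of $U$), there exist $T_2=T^*(M)>0$ and $C=C(M)>0$ such that $|U(x+e,T)|\le C$ whenever $|x+e|\le M$ and $T\in(0,T_2)$. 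Hence every control value of interest lies in the fixed compact set $\U:=\{u\in\R^m:|u|\le C\}$, independently of $T$.

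Next I would invoke consistency. Applying Definition~\ref{def:consistent} to the compact set $\Omega:=\X\times\U\subset\R^n\times\R^m$ yields $\rho\in\K$ and $T_0>0$ such that $|F^e(x,u,T)-F^a(x,u,T)|\le T\rho(T)$ for all $(x,u)\in\Omega$ and $T\in[0,T_0]$. Taking $u=U(x+e,T)\in\U$ (by the previous step), for all $x\in\X$, $e\in\E$ and $T\in(0,\min\{T_0,T_2\})$ we obtain
\begin{align*}
  |\bar F^e(x,e,T)-\bar F^a(x,e,T)| &= |F^e(x,U(x+e,T),T)-F^a(x,U(x+e,T),T)| \le T\rho(T).
\end{align*}

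Then I would combine this with hypothesis~\ref{item:sufcondconsistency_Fa_nondecreasing_bound} via the triangle inequality: for all $x,z\in\X$, $e\in\E$ and $T\in(0,\min\{T_0,T_2,T_1\})$,
\begin{align*}
  |\bar F^e(x,e,T)-\bar F^a(z,e,T)| &\le |\bar F^e(x,e,T)-\bar F^a(x,e,T)| + |\bar F^a(x,e,T)-\bar F^a(z,e,T)| \\
  &\le T\rho(T) + (1+T\sigma(T))|x-z|.
\end{align*}
Setting $\rho_0:=\rho\in\K$ and $T^*:=\min\{T_0,T_2,T_1\}>0$, with $\sigma$ nondecreasing as assumed, this is precisely inequality~\eqref{eq:FeFaIneq}. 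Lemma~\ref{lem:uFaMSEC} then gives that $(U,F^a)$ is MSEC with $(U,F^e)$.

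The only delicate step is the first one: one must confine the control values feeding $F^e$ and $F^a$ to a single compact set that does \emph{not} depend on the sampling period, so that the $\K$-function $\rho$ produced by consistency is itself independent of $T$; this is exactly what local uniform boundedness of $U$ supplies (note it is essential here that $U$ acts on $x+e$ rather than on $x$, which is why the error set $\E$ must enter the definition of $M$). Everything else is the triangle inequality and intersecting the finitely many thresholds $T_0,T_1,T_2$.
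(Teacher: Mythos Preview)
Your proof is correct and follows essentially the same approach as the paper's own proof: both bound $|x+e|$ by some $M$, use local uniform boundedness of $U$ to confine the control values to a compact set, invoke consistency on $\X\times\U$ to get $\rho\in\K$, combine with hypothesis~\ref{item:sufcondconsistency_Fa_nondecreasing_bound}) via the triangle inequality, and conclude by Lemma~\ref{lem:uFaMSEC}. The only cosmetic differences are in the naming of the intermediate thresholds.
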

\begin{proof}
  Let $\X \subset \R^n$ and $\E \subset \R^n$ be compact, and let $M>0$ be such that $|x+e|\leq M$ for all $x\in \X$ and $e\in \E$. From~\ref{item:sufcondconsistency_controllocbound}), there exist $T^*(M)>0$ and $C(M)>0$ such that $|U(x+e,T)|\leq C$ for all $T\in(0,T^*)$, $x\in \X$ and $e\in \E$. Define $\mathcal{C}:=\{u \in \R^m : |u|\leq C\}$ and $\Omega := \X \times \mathcal{C}$. From~\ref{item:sufcondconsistency_consistentEuler}), there exist $\rho \in \K$ and $T_0>0$ such that 
  \begin{align}
    \label{eq:lemma28_1}
    |F^e(x,U(x+e,T),T)-F^a(x,U(x+e,T),T)| \leq T\rho(T)
  \end{align}
  for all $T\in(0,T_2)$, $x\in \X$ and $e\in \E$  with $T_2:=\min \{T^*,T_0\}$. For all $x,z \in \X$ and $e\in \E$ we have
  \begin{align}
    \label{eq:lemma28_2}
    |F^e&(x,U(x+e,T),T)-F^a(z,U(z+e,T),T)|\notag \\\leq 
                       &|F^e(x,U(x+e,T),T)-F^a(x,U(x+e,T),T)|  \notag \\
                       &+|F^a(x,U(x+e,T),T)-F^a(z,U(z+e,T),T)|.
  \end{align}
  Defining $\hat T:=\min \{T_1,T_2\}$ and using~(\ref{eq:lemma_sufcondconsistency_Fa_nondecreasing}) and \eqref{eq:lemma28_1} in 
\eqref{eq:lemma28_2}, then
\begin{align}
|F^e(x,U(x+e,T),T)-&F^a(z,U(z+e,T),T)|\notag \\\leq 
&T\rho(T) + (1+T\sigma (T))|x-z|,
\end{align}
hence \eqref{eq:FeFaIneq} holds. By Lemma~\ref{lem:uFaMSEC}, then $(U,F^a)$ is MSEC with $(U,F^e)$.
\end{proof}

In the next section, we give conditions on the approximate model and control law in order to establish the semiglobal practical ISS for the (exact) closed-loop system.

\section{Main Results}
\label{sec:main}

In this section, we show that under MSEC, the SP-ISS-VSR property for the approximate model carries over to the exact model. 

Our first contribution is the following.
%======================================
%THEOREM
%======================================
\begin{theorem}
  \label{theorem:ISS_aprox_to_exact}
  Suppose that $(U,F^a)$ is MSEC with $(U,F^e)$, and that the system $x^a_{k+1}=\bar F^a(x^a_k,e_k,T_k)$ is SP-ISS-VSR. Then, the system $x^e_{k+1}=\bar F^e(x^e_k,e_k,T_k)$ is SP-ISS-VSR.
\end{theorem}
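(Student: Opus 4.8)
The plan is to compare the exact solution $x^e_k := x^e(k,x_0,\{e_i\},\{T_i\})$ with the approximate one $x^a_k$ via the triangle inequality $|x^e_k|\le |x^a_k| + |x^e_k - x^a_k|$, bounding $|x^a_k|$ through the SP-ISS-VSR estimate for $\bar F^a$ and $|x^e_k - x^a_k|$ through Lemma~\ref{LEMA3}. The obstruction is that Lemma~\ref{LEMA3} controls the mismatch only over a bounded time horizon and only while the approximate solution stays in a prescribed compact set, so the estimate has to be restarted periodically. I would therefore partition the sample indices by the ``restart'' times $k_j := \max\{k\in\N_0 : \sum_{i=0}^{k-1}T_i \le jL\}$ (for a fixed $L$ chosen below) and, on each block $k_{j-1}<k\le k_j$, re-run the argument with $x^e_{k_{j-1}}$ as initial condition, using the causality property $x^e_{k_{j-1}+m} = x^e(m,x^e_{k_{j-1}},\{e_{k_{j-1}+i}\},\{T_{k_{j-1}+i}\})$ of the difference equation~\eqref{eq:system1}.

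Fix $M,E,R>0$ and let $\beta\in\KL$, $\gamma\in\Kinf$ be the functions from the SP-ISS-VSR property of $\bar F^a$. The first step is a bootstrap to pin down a single ``operating radius'': pick a small $\delta>0$ (constrained below through $R$), set $M^+:=\max\{M,\gamma(E)+3\delta\}$, apply SP-ISS-VSR with data $(M^+,E,\delta)$ to obtain $T^\star_a>0$ such that approximate solutions started in $\{|x|\le M^+\}$ with $\{T_i\}\in\Phi(T^\star_a)$ remain in the compact set $\X:=\{x:|x|\le\beta(M^+,0)+\gamma(E)+\delta+1\}$; put $\E:=\{e:|e|\le E\}$. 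Then choose $L\ge 2$ large enough that $\beta(M^+,L-1)\le\delta$ (possible since $\beta(M^+,\cdot)\to 0$), apply Lemma~\ref{LEMA3} once with data $\X,\E,2L,\delta$ to obtain $\tilde T>0$, and set $T^\star:=\min\{T^\star_a,\tilde T,1\}$. With this choice every $\{T_i\}\in\Phi(T^\star)$ has $T_i<1\le L/2$, so each block is non-empty and the within-block elapsed time $\sum_{i=k_{j-1}}^{k-1}T_i$ always stays below $L+1<2L$, which is exactly the horizon handled by Lemma~\ref{LEMA3}.

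On the first block the two estimates combine into $|x^e_k| \le \beta\!\left(|x_0|,\sum_{i=0}^{k-1}T_i\right) + \gamma(\sup_{i<k}|e_i|) + 2\delta$. Evaluating at $k=k_1$ and using $\beta(M^+,L-1)\le\delta$ gives $|x^e_{k_1}|\le\gamma(\sup_{i<k_1}|e_i|)+3\delta\le M^+$, and an induction on $j$ then yields $|x^e_{k_j}|\le\gamma(\sup_{i<k_j}|e_i|)+3\delta$ for all $j\ge 1$; this invariance is what makes the single choice of $\X$ and $T^\star$ legitimate on every block. On a block $j\ge 2$, bounding the within-block elapsed time below by $0$, using $\beta(\cdot,0)\ge\mathrm{id}$ (the Remark following Definition~\ref{def:ISS}) and the elementary estimate $\beta(a+b,0)\le\beta(2a,0)+\beta(2b,0)$, one obtains $|x^e_k|\le\beta\big(2\gamma(\sup_{i<k}|e_i|),0\big)+\beta(6\delta,0)+\gamma(\sup_{i<k}|e_i|)+2\delta$. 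Choosing $\delta$ small enough that $\beta(6\delta,0)+2\delta\le R$ and setting $\gamma_e(s):=\beta(2\gamma(s),0)+\gamma(s)$ — of class $\Kinf$ because $\beta(\cdot,0)\ge\mathrm{id}$ forces $\beta(\cdot,0)\in\Kinf$ — both cases deliver $|x^e_k|\le\beta\!\left(|x_0|,\sum_{i=0}^{k-1}T_i\right)+\gamma_e(\sup_{i<k}|e_i|)+R$, while $k=0$ is immediate from $\beta(\cdot,0)\ge\mathrm{id}$. Hence $\bar F^e$ is SP-ISS-VSR with $\KL$-function $\beta$ and $\Kinf$-function $\gamma_e$.

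I expect the main difficulty to be the bookkeeping forced by the varying sampling rate: a time block of length $L$ contains a variable number of samples, so every horizon condition must be phrased through the partial sums $\sum T_i$, and one must verify that a single application of Lemma~\ref{LEMA3} (one compact set, one horizon) together with a single $T^\star$ works uniformly over all blocks and all admissible sequences — including degenerate cases in which $\sum_i T_i$ converges and only finitely many blocks occur. A second, subtler point is that $\beta$ may overshoot ($\beta(r,0)\gg r$), so the exact state on later blocks can transiently exceed the invariant radius $\gamma(\sup|e_i|)+3\delta$; this is why $\gamma_e$ must be taken strictly larger than $\gamma$, with the resulting overshoot $\beta(6\delta,0)+2\delta$ of the residual absorbed into $R$ by taking $\delta$ small. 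Finally, since all exact solution values remain in $\X$, $F^e$ is well-defined at every argument used (shrinking $T^\star$ once more if needed).
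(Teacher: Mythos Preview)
Your proposal is correct and follows essentially the same approach as the paper's proof: both partition time into blocks of approximate duration $L$, combine the SP-ISS-VSR bound for $\bar F^a$ with the one-block mismatch estimate from Lemma~\ref{LEMA3}, show that the exact state returns to a ball of radius $\gamma(E)+3\delta$ at each restart time, and absorb the transient overshoot on later blocks into the enlarged gain $\gamma_e(s)=\beta(2\gamma(s),0)+\gamma(s)$ (the paper's $\hat\gamma$) together with the practical term $R$. The only cosmetic differences are the restart-time bookkeeping (you use $k_j=\max\{k:\sum_{i<k}T_i\le jL\}$ with horizon $2L$, the paper uses $s(k)=\max\{r\ge k+1:\sum_{i=k}^{r-1}T_i\le L\}$ with horizon $L$) and the choice $M^+=\max\{M,\gamma(E)+3\delta\}$ versus the paper's $\tilde M=M+\gamma(E)$.
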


%======================================
% PROOF
%======================================
\begin{IEEEproof}
  Let $\beta\in\KL$ and $\gamma\in\Kinf$ characterize the SP-ISS-VSR property of $x^a_{k+1}=\bar F^a(x^a_k,e_k,T_k)$. Define
\begin{align}
  \label{eq:defHatg}
  \hat \gamma (\cdot)&:=\beta(2\gamma(\cdot),0)+\gamma(\cdot).
\end{align}
Let $M>0$, $E>0$ and $R> 0$, and select $\eta>0$ and $R_a>0$ such that 
\begin{align}
  \label{eq:3etaM}
  6\eta &\leq M,  
  &\beta(6\eta,0)+R_a &<R/2,   
  &R_a &<\eta.
\end{align}
Since $x_{k+1}^a = \bar F^a(x_k^a, e_k, T_k)$ is SP-ISS-VSR, in correspondence with $\tilde M:= M + \gamma(E)$, $E$ and $R_a$, there exists $T^a>0$ such that for all $k \in \N_0$,
\begin{align}
\label{eq:ISS1}
   |x^a_k| \le   \beta \left(|x_{0}^a|,\sum_{i=0}^{k-1}T_i \right)+ \gamma\left(\sup_{0\le i \le k-1}|e_i|\right)+R_a
\end{align}
whenever $|x_{0}^a| \leq \tilde M$ and for all sequences $\{T_i\}$ and $\{e_i\}$ such that $\{T_i\} \in \Phi(T^a)$ and $\|\{e_i\}\| \leq E$. Define
\begin{align}
  \tilde \X&:=\{x \in \R^n :|x|\leq \beta(\tilde M,0)+\gamma(E)+R_a\} \label{eq:kingcrimson}\\
  \E&:=\{e \in \R^n :|e|\leq E\}. 
\end{align}
%%%
%%===============================
Let $L>1$ satisfy
\begin{equation}
\label{eq:betaneqeta} % prev eq:83
  \beta\Big(\beta\big(M,0\big)+\gamma(E)+R_a,\ L-1\Big) \leq \eta.
\end{equation}
%%%
%%%
Take $\tilde T>0$ in correspondence with the tuple $(\tilde \X,\E,L,\eta)$, as per Lemma~\ref{LEMA3}. We will show that $\beta\in\KL$ and $\hat\gamma\in\Kinf$ characterize the SP-ISS-VSR of $x^e_{k+1}=\bar F^e(x^e_k,e_k,T_k)$, with $T^\star:= \min\{1,T^a,\tilde T\}$.
Consider sequences $\{e_i\} \subset \E$ and $\{T_i\} \subset \Phi(T^\star)$. For every $k,\ell\in\N_0$ and $\xi\in\R^n$, define
\begin{multline*}
  \Delta x_{k,\xi}^\ell \dfn\\ |x^e(k-\ell,\xi,\{e_{i+\ell}\},\{T_{i+\ell}\}) - x^a(k-\ell,\xi,\{e_{i+\ell}\},\{T_{i+\ell}\})|.
\end{multline*}
From (\ref{eq:ISS1}) and (\ref{eq:kingcrimson}) it follows that if $|x_{0}^a| \le \tilde M$, then $x_k^a \in \tilde\X$ for all $k \in\N_0$. Consider an initial condition $x_0^e = x_0^a = \xi$ such that $|\xi| \le M \le \tilde M$. 
Since $(U,F^a)$ is MSEC with $(U,F^e)$, by Lemma~\ref{LEMA3} it follows that
for all $k$ such that $\sum_{i=0}^{k-1} T_i \leq L$, we have $|\Delta x_{k,\xi}^0| \leq \eta$. We thus have
\begin{align}
  |x^e(k,&\xi,\{e_i\},\{T_i\})| \leq  |x^a(k,\xi,\{e_{i}\},\{T_{i}\})|+ |\Delta x^0_{k,\xi}| \notag\\
  \label{eq:xeleqxapDelta}
         &\leq  \beta \left(|\xi|,\sum_{i=0}^{k-1}T_i \right)+ \gamma\left(\sup_{0\le i \le k-1}|e_i|\right)+R_a+\eta \\
  \label{eq:bound_1st}
         &\leq \beta \left(|\xi|,\sum_{i=0}^{k-1}T_i \right) 
 + \hat \gamma\left(\sup_{0\le i \le k-1}|e_i|\right)+R,
\end{align}
for all $k$ for which $\sum_{i=0}^{k-1} T_i \leq L$, where we have used the facts that $\gamma\leq\hat \gamma$ and $\eta + R_a \le 6\eta + R_a \le \beta(6\eta,0) + R_a < R/2 < R$. 
For every $k\in\N_0$, define
\begin{align}
s(k):= \max \left\{r\in \N_0 : r \ge k+1, \sum_{i=k}^{r-1} T_i \leq L\right\}.
\end{align}
Note that $s(k) \ge k+1$ for all $k\in\N_0$ because $L>1$ and $T_i < 1$ for all $i\in\N_0$. Also, $\sum_{i=k}^{s(k)-1} T_i > L-T^\star > L-1$ holds for all $k\in\N_0$.
We thus have
\begin{align}
\beta\left (|\xi|,\sum_{i=0}^{s(0)-1} T_i \right ) & \leq \beta(M,L-1) 
\notag \\
&\leq \beta(\beta(M,0)+\gamma(E)+R_a,L-1) 
\leq \eta, \label{eq:Beta_neq_beta_gamma} 
\end{align}
where we have used the fact $M\leq \beta(M,0)$ and (\ref{eq:betaneqeta}).
Evaluating (\ref{eq:xeleqxapDelta}) at $k=s(0)$, using \eqref{eq:3etaM} and $\eqref{eq:Beta_neq_beta_gamma}$, 
and defining \mbox{$x^e_{k}:=x^e(k,\xi,\{e_i\},\{T_i\})$} it follows that 
\begin{align}
|x^e_{s(0)}|
&\leq \beta\left(|\xi|, \sum_{i=0}^{s(0)-1}{T_i}\right) +\gamma\left(\sup_{0\le i \le s(0)-1}|e_i|\right)+R_a+\eta \notag \\
&\leq \gamma\left(E\right)+3\eta 
\leq \tilde M . \label{eq:xeleqgamma2eta} 
\end{align}

\begin{claim}
  \label{clm:iter}
Suppose that $|x^e_{r}|\leq \gamma(E) + 3\eta$ for some $r\in \N_0$. Then,
 $|x^e_{s(r)}| \leq \gamma(E) + 3\eta$ and $|x^e_k| \le \hat\gamma(E) + R$ for all $r \le k \le s(r)$.
\end{claim}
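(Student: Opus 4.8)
The plan is to prove Claim~\ref{clm:iter} by a one-step ``restart'' argument: we treat time $r$ as a new initial instant, apply the bound \eqref{eq:xeleqxapDelta} (which was derived for the shifted sequences) on the window $[r,s(r)]$, and then close the loop so that the hypothesis at time $r$ reproduces itself at time $s(r)$. Concretely, I would first note that by hypothesis $|x^e_r| \le \gamma(E) + 3\eta \le \tilde M$ (using $3\eta \le M$ from \eqref{eq:3etaM} and $\gamma(E) \le \gamma(E)$, so that $|x^e_r| \le M + \gamma(E) = \tilde M$; in fact $\gamma(E)+3\eta \le \gamma(E) + M/2 \le \tilde M$). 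Then I set $\xi' := x^e_r$ as the initial condition for the shifted system with error sequence $\{e_{i+r}\}$ and sampling sequence $\{T_{i+r}\}$, both still lying in $\E$ and $\Phi(T^\star) \subset \Phi(\tilde T)$ respectively.

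The second step is to run the same chain of inequalities \eqref{eq:xeleqxapDelta}--\eqref{eq:bound_1st} on this shifted system: since $x^a$ started from $\xi'$ with $|\xi'| \le \tilde M$ stays in $\tilde \X$, Lemma~\ref{LEMA3} applied to $(\tilde\X,\E,L,\eta)$ gives $|\Delta x^r_{k,\xi'}| \le \eta$ for all $k-r$ with $\sum_{i=r}^{k-1} T_i \le L$, i.e. for all $r \le k \le s(r)$. Hence for such $k$,
\begin{align*}
|x^e_k| &\leq \beta\!\left(|\xi'|, \sum_{i=r}^{k-1} T_i\right) + \gamma\!\left(\sup_{r\le i\le k-1}|e_i|\right) + R_a + \eta \\
&\leq \beta(|\xi'|,0) + \gamma(E) + R_a + \eta.
\end{align*}
Now I would bound $\beta(|\xi'|,0) = \beta(|x^e_r|,0) \le \beta(\gamma(E)+3\eta,0)$. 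Using $\gamma(E) + 3\eta \le 2\gamma(E) + 6\eta$ and monotonicity, $\beta(\gamma(E)+3\eta,0) \le \beta(2\gamma(E),0) + \beta(6\eta,0)$ is not literally valid for a general $\KL$ function, so instead I would split more carefully: $\beta(\gamma(E)+3\eta,0) \le \beta(2\gamma(E),0) + \beta(6\eta,0)$ does hold if one first uses $\beta(a+b,0)\le \beta(2a,0)+\beta(2b,0)$ — but $\beta$ need not be subadditive. The clean route is to note $\gamma(E)+3\eta \le \max\{2\gamma(E),6\eta\}$ is false too; rather $\gamma(E)+3\eta \le 2\gamma(E)$ when $\gamma(E)\ge 3\eta$ and $\le 6\eta$ otherwise, so in either case $\beta(\gamma(E)+3\eta,0) \le \beta(2\gamma(E),0)+\beta(6\eta,0)$ because one of the two terms dominates. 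Thus $|x^e_k| \le \beta(2\gamma(E),0) + \beta(6\eta,0) + \gamma(E) + R_a + \eta = \hat\gamma(E) + \beta(6\eta,0) + R_a + \eta \le \hat\gamma(E) + R$, using \eqref{eq:defHatg} and $\beta(6\eta,0)+R_a < R/2$, $\eta \le 6\eta$ so $\eta+\beta(6\eta,0)+R_a < R$. This gives the second assertion of the claim.

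For the first assertion, I would evaluate the sharper bound at $k=s(r)$. Since $\sum_{i=r}^{s(r)-1} T_i > L-1$, and $|\xi'| = |x^e_r| \le \gamma(E)+3\eta \le \beta(M,0)+\gamma(E)+R_a$ (using $\gamma(E)+3\eta \le \tilde M \le \beta(\tilde M,0)$ and $3\eta\le M$, so in particular $|\xi'|$ is bounded by the argument appearing in \eqref{eq:betaneqeta}), monotonicity of $\beta$ in both arguments together with \eqref{eq:betaneqeta} yields $\beta(|\xi'|,\sum_{i=r}^{s(r)-1}T_i) \le \beta(\beta(M,0)+\gamma(E)+R_a,\,L-1) \le \eta$. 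Substituting into the bound at $k=s(r)$,
\[
|x^e_{s(r)}| \le \eta + \gamma(E) + R_a + \eta \le \gamma(E) + 3\eta,
\]
since $R_a < \eta$. This is exactly the claimed inequality, so the induction step is complete. The main obstacle — and the step needing the most care — is the bookkeeping around $\beta$: it is not subadditive, so every place where an argument like $\gamma(E)+3\eta$ must be dominated by $\beta(2\gamma(E),0)+\beta(6\eta,0)$ relies on the ``one term dominates'' trick, and one must also be careful that $|x^e_r|$ genuinely lies in the sublevel set for which \eqref{eq:betaneqeta} was arranged. Everything else is a direct reuse of the displayed inequalities with shifted indices and of Lemma~\ref{LEMA3}.
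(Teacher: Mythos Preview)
Your proof is correct and follows essentially the same route as the paper: restart at time $r$ with $\xi' = x^e_r$, use the SP-ISS-VSR bound for the approximate model on the shifted sequences (valid since $|\xi'|\le\gamma(E)+3\eta\le\tilde M$), confine $x^a$ to $\tilde\X$, invoke Lemma~\ref{LEMA3} to get $|\Delta|\le\eta$ on $[r,s(r)]$, then combine and evaluate at $k=s(r)$ using \eqref{eq:betaneqeta}. Your worry about ``subadditivity'' of $\beta(\cdot,0)$ is unnecessary: for any $\chi\in\K$ one has $\chi(a+b)\le\chi(2\max\{a,b\})\le\chi(2a)+\chi(2b)$, which is exactly your ``one term dominates'' argument and is also what the paper uses without comment; so $\beta(\gamma(E)+3\eta,0)\le\beta(2\gamma(E),0)+\beta(6\eta,0)$ is legitimate as written.
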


\indent\emph{Proof of Claim~\ref{clm:iter}:}
  Set $x_0^a = x_r^e$ and consider $x_{k+1}^a = \bar F^a(x_k^a,e_{k+r},T_{k+r})$ for all $k\in \N_0$. For all $r\leq k \leq s(r)$, we have
  \begin{align}
    |x^e_k| &= |x^e(k,\xi,\{e_i\},\{T_i\})| = |x^e(k-r,x^e_{r},\{e_{i+r}\},\{T_{i+r}\})| \notag\\
    \label{eq:xexaDelt}
  &\leq |x^a_{k-r}| + |\Delta x_{k,x^e_r}^r|. 
\end{align}
Since $|x_0^a| \le \gamma(E) + 3\eta \le \tilde M$, $\{e_i\} \subset \E$, $\{T_i\} \in \Phi(T^\star) \subset \Phi(T^a)$, and since $\bar F^a$ is SP-ISS-VSR, then for all $k\in\N_0$ we have
\begin{align}
  |x_k^a| &\le \beta\left( |x_0^a| , \sum_{i=0}^{k-1} T_{i+r} \right) + \gamma\left( \sup_{0\le i \le k-1} |e_{i+r}| \right) + R_a\notag\\
  \label{eq:bndxaiter}
  &\le \beta\left( \gamma(E) + 3\eta, \sum_{i=0}^{k-1} T_{i+r} \right) + \gamma(E) + R_a\\
  &\le \beta\left(\tilde M, \sum_{i=0}^{k-1} T_{i+r} \right) +\gamma(E) + R_a.\notag
\end{align}
It thus follows that $x^a_k \in \tilde \X$ for all $k \in \N_0$. By Lemma~\ref{LEMA3}, we have
\begin{align}
  \label{eq:Deltaetabnd}
  |\Delta x_{k,x^e_r}^r| &\leq \eta \quad \text{for all }r\leq k \leq s(r).
\end{align}
Combining the bounds obtained so far, we reach, for all $r \le k \le s(r)$,
\begin{align}  
  |x^e_k| &\leq \beta\left(\tilde M,\sum_{i=r}^{k-1}{T_i}\right) +\gamma\left( E \right) + 2\eta.
  \label{eq:neilyoung}
\end{align}
In particular at $k=s(r)$, and taking into account that $\tilde M= M+\gamma(E) \leq \beta(M,0)+\gamma(E)+R_a$, we obtain
\begin{align}
  |x^e_{s(r)}| 
  &\leq \beta\left(\beta(M,0)+\gamma(E)+R_a,\sum_{i=r}^{s(r)-1}{T_i}\right) +\gamma\left( E \right) +  2\eta \notag \\
  &\leq \gamma\left( E \right)+ 3\eta \label{eq:beatles}
\end{align}
where we have used the fact that $\sum_{i=r}^{s(r)-1} T_i >L-1$ and \eqref{eq:betaneqeta}.

From~(\ref{eq:xexaDelt})--(\ref{eq:Deltaetabnd}), we have, for all $r \le k \le s(r)$,
\begin{align}
  |x^e_k| &\leq \beta(\gamma(E) + 3\eta, 0) + \gamma(E) + R_a + \eta \notag \\
          &\leq \beta(2\gamma(E),0) + \beta(6\eta, 0) + \gamma(E) + R_a + \eta \notag \\
          &\leq \hat\gamma(E) + \beta( 6\eta, 0) + R_a + \eta \notag \\
          &\leq \hat\gamma(E)+ R,  
            \label{eq:last}
\end{align}
where we have used the fact that $\eta < 6\eta < \beta(6\eta,0) + R_a < R/2$. This concludes the proof of the claim.\mer

Since from (\ref{eq:xeleqgamma2eta}) we have $|x^e_{s(0)}| \le \gamma(E) + 3\eta$, iterative application of Claim~\ref{clm:iter} and the fact that $s(k) \ge k + 1$ for all $k\in\N_0$ show that 
\begin{align*}
  |x_k^e| &\le \hat\gamma(E) + R \quad \text{for all }k\ge s(0).
\end{align*}
Combining the latter bound with (\ref{eq:bound_1st}), valid for $0 \le k \le s(0)$, it follows that
\begin{align}
\label{eq:result}
  |x^e_k|\leq \beta\left(|\xi|,\sum_{i=0}^{k-1}T_i\right)+\hat\gamma(E)+R,  \quad \forall k\geq0.
\end{align}
The proof concludes by noting that, by causality, the trajectory $x_k^e$ cannot depend on future values of $e_i$, and hence $E$ in (\ref{eq:result}) can be replaced by $\sup_{0 \le i \le k-1} |e_i|$.
\end{IEEEproof}

Theorem~\ref{theorem:ISS_aprox_to_exact} shows that the SP-ISS-VSR property 
for the exact closed-loop discrete-time model can be ensured under MSEC if 
the approximate closed-loop discrete-time model is SP-ISS-VSR. 
It would thus be useful to have checkable conditions in order to
ensure that the approximate model is suitable. 
A set of conditions is given by Theorem~\ref{lemma:SP-ISS-VSR_Lyap},
which states Lyapunov-based sufficient conditions
for a discrete-time (closed-loop) model to be SP-ISS-VSR.
These conditions consist of specific boundedness and continuity 
requirements
and a Lyapunov-type condition on the closed-loop model. 
Theorem~\ref{lemma:SP-ISS-VSR_Lyap} constitutes the main contribution of the current paper
and its proof, which is highly nontrivial, consitutes our main technical result. 
%
%===========================================
% VERSION
%===========================================
\begin{theorem}
  \label{lemma:SP-ISS-VSR_Lyap}  
  Suppose that \ref{item:SP-ISS-VSR_lemma_3_zero})--\ref{item:SP-ISS-VSR_lemma_4}) are satisfied.
  \begin{enumerate}[i)]  
\item There exists $\mathring T>0$ so that $\bar F(0,0,T) = 0$ for all $T\in(0,\mathring T)$. \label{item:SP-ISS-VSR_lemma_3_zero}
\item There exists $\hat T>0$ such that for every $\epsilon>0$ there exists $\delta=\delta(\epsilon)>0$ 
such that $|\bar F(x,e,T)|<\epsilon$ whenever $|x|\leq \delta$, $|e|\leq \delta$ and $T\in(0,\hat T)$.
  \label{item:SP-ISS-VSR_lemma_1_cont}

\item For every $M\ge 0$ and $E\ge 0$, there exist $C=C(M,E)>0$ and $\check{T} = \check{T}(M,E) > 0$, with $C(\cdot,\cdot)$ nondecreasing in each variable and $\check{T}(\cdot,\cdot)$ nonincreasing in each variable, such that $|\bar F(x,e,T)|\leq C$ for all $|x|\leq M$, $|e|\leq E$ and $T \in (0,\check{T})$. 
\label{item:SP-ISS-VSR_lemma_2_bound}
    \item There exist $\alpha_1, \alpha_2, \alpha_3 \in \K_{\infty}$ and $\rho \in \K$ such that for every $M\ge R>0$ and $E>0$ there exist $\tilde T = \tilde T(M,E,R) > 0$ and $V: \R^n \rightarrow \R_{\geq0}$ such that \label{item:SP-ISS-VSR_d}  \label{item:SP-ISS-VSR_lemma_4}
    \begin{align}
      \label{eq:SP-ISS-VSR_d_1}
      &\alpha_1(|x|) \leq V(x) \leq \alpha_2 (|x|),  \quad \forall x \in \R^n,\text{ and}\\
      \label{eq:SP-ISS-VSR_d_2} %!!!
      V(&\bar F(x,e,T)) -V(x) \leq -T\alpha_3 (|x|), %\notag\\ 
      \quad\forall (x,e,T) \text{ satisfying}\notag\\ 
      &\rho(|e|)+ R \le |x| \le M,\quad |e| \le E,\quad T\in (0,\tilde T).
    \end{align} 
  \end{enumerate}
  Then, the system \eqref{eq:system1} is SP-ISS-VSR.
\end{theorem}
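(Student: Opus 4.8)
The plan is to prove SP\nobreakdash-ISS\nobreakdash-VSR directly from the Lyapunov inequality of item~\ref{item:SP-ISS-VSR_lemma_4}), producing a $\KL$ transient estimate together with an ultimate bound, and to use items~\ref{item:SP-ISS-VSR_lemma_1_cont})--\ref{item:SP-ISS-VSR_lemma_2_bound}) to confine the trajectory and to control the single\nobreakdash-step ``overshoot'' on the region $\{\,|x|<\rho(|e|)+R\,\}$, on which item~\ref{item:SP-ISS-VSR_lemma_4}) guarantees nothing. The main structural constraint, dictated by Definition~\ref{def:SP-ISS-VSR}, is that $\beta\in\KL$ and $\gamma\in\Kinf$ must be built once and for all from the fixed data $\alpha_1,\alpha_2,\alpha_3,\rho$ and the bounding function $C(\cdot,\cdot)$ of item~\ref{item:SP-ISS-VSR_lemma_2_bound}), whereas $V$, $\tilde T$ and all auxiliary radii are allowed to depend on $M,E,R$.

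\emph{Construction of $\beta$ and $\gamma$.} Put $\hat\alpha:=\alpha_3\comp\alpha_2^{-1}\in\K$. A comparison argument (the kind needed to pass from a non\nobreakdash-uniformly sampled difference inequality to a $\KL$ estimate in the accumulated time), applied to $V(x_{k+1})\le V(x_k)-T_k\hat\alpha(V(x_k))$ while $V(x_k)$ ranges over a compact interval bounded away from $0$ and $T_k$ is small, yields $\tilde\beta\in\KL$ depending only on $\hat\alpha$; I then set $\beta(s,t):=\alpha_1^{-1}\big(\tilde\beta(\alpha_2(s),t)\big)$, which is of class $\KL$ and satisfies $\beta(r,0)\ge r$. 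Item~\ref{item:SP-ISS-VSR_lemma_1_cont}) forces $\bar F(0,0,T)=0$ (so item~\ref{item:SP-ISS-VSR_lemma_3_zero}) is in fact implied) and lets one assume w.l.o.g.\ that $C(M,E)\ge M$ and $C(M,E)\to0$ as $(M,E)\to(0,0)$; hence $C_\star(s):=\inf_{r>0}C(\rho(s)+r,s)$ is nondecreasing and vanishes at $0$, and I fix $\gamma\in\Kinf$ with $\gamma(s)\ge\alpha_1^{-1}\big(\alpha_2(2C_\star(s)+s)\big)$ for all $s\ge0$.

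\emph{The estimate for given $M\ge R>0$, $E>0$.} Choose $r_E>0$ with $C(\rho(E)+r_E,E)<\tilde C(E):=2C_\star(E)+E$ and set $R':=\min\{R,r_E,\rho(E)+E\}>0$; then on the bad set the overshoot obeys $|\bar F(x,e,T)|\le C(\rho(E)+R',E)<\tilde C(E)$, a quantity independent of $R$, with $\rho(E)+R'\le\tilde C(E)$. Put $c:=\max\{\alpha_2(M),\alpha_2(\tilde C(E))\}$, $M':=\alpha_1^{-1}(c)$, $\nu:=\alpha_2(\tilde C(E))$, invoke item~\ref{item:SP-ISS-VSR_lemma_4}) with the tuple $(M',E,R')$ to obtain $V$ and $\tilde T$, and let $T^\star$ be the minimum of $\tilde T$, of the $\check T$ from item~\ref{item:SP-ISS-VSR_lemma_2_bound}) at the relevant arguments, of the threshold from item~\ref{item:SP-ISS-VSR_lemma_1_cont}), and of a bound making the comparison recursion well behaved on $[\nu,c]$. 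For $|x_0|\le M$, $\|\{e_i\}\|\le E$, $\{T_i\}\in\Phi(T^\star)$, the sublevel set $\Omega_c:=\{x:V(x)\le c\}$ is forward invariant: where $|x_k|\ge\rho(|e_k|)+R'$ one has $x_k\in\Omega_c\subseteq B_{M'}$, so \eqref{eq:SP-ISS-VSR_d_2} gives $V(x_{k+1})\le V(x_k)\le c$; where $|x_k|<\rho(|e_k|)+R'\le\rho(E)+R'$, item~\ref{item:SP-ISS-VSR_lemma_2_bound}) gives $|x_{k+1}|<\tilde C(E)$, hence $V(x_{k+1})<\nu\le c$. Since by \eqref{eq:SP-ISS-VSR_d_1} one has $B_M\subseteq\Omega_c\subseteq B_{M'}$, the trajectory stays in $B_{M'}$. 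The same dichotomy shows $\Omega_\nu$ is forward invariant, and that while $V(x_k)>\nu$ one has $\alpha_2(|x_k|)\ge V(x_k)>\alpha_2(\rho(E)+R')$, so $|x_k|>\rho(E)+R'\ge\rho(|e_k|)+R'$ and, by \eqref{eq:SP-ISS-VSR_d_2}, $V(x_{k+1})\le V(x_k)-T_k\hat\alpha(V(x_k))$. The comparison argument then gives $V(x_k)\le\max\{\tilde\beta(V(x_0),\sum_{i=0}^{k-1}T_i),\nu\}$ for all $k$. Using $\alpha_1(|x_k|)\le V(x_k)$, $V(x_0)\le\alpha_2(|x_0|)$ and $\alpha_1^{-1}(\max\{a,b\})\le\alpha_1^{-1}(a)+\alpha_1^{-1}(b)$, this yields $|x_k|\le\beta(|x_0|,\sum_{i=0}^{k-1}T_i)+\alpha_1^{-1}(\nu)\le\beta(|x_0|,\sum_{i=0}^{k-1}T_i)+\gamma(E)+R$. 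Finally, since $x_k$ cannot depend on $e_i$ for $i\ge k$, the whole argument may be rerun with $E$ replaced by $\sup_{0\le i\le k-1}|e_i|$, which is exactly \eqref{eq:SP-ISS-VSR}.

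\emph{Main obstacle.} The technical core is the comparison lemma that turns the non\nobreakdash-uniformly sampled decrease $V(x_{k+1})\le V(x_k)-T_k\alpha_3(|x_k|)$ into a single $\KL$ estimate in the elapsed time $\sum_iT_i$, uniform over every $\{T_i\}\in\Phi(T^\star)$ and with $\tilde\beta$ independent of $M,E,R$; this is where the non\nobreakdash-uniform sampling genuinely bites. A second delicate point is reconciling the one\nobreakdash-step overshoot on $\{\,|x|<\rho(|e|)+R\,\}$ — where item~\ref{item:SP-ISS-VSR_lemma_4}) gives no information and item~\ref{item:SP-ISS-VSR_lemma_2_bound}) asserts only boundedness — with a \emph{universal} gain $\gamma$: one must shrink $R'$ below a quantity depending on $E$ (not on $R$) so that the overshoot becomes $R$\nobreakdash-independent, and simultaneously exploit item~\ref{item:SP-ISS-VSR_lemma_1_cont}) to ensure that this overshoot vanishes as the error tends to $0$, so that the constructed $\gamma$ is indeed of class $\Kinf$.
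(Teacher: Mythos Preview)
Your proposal is correct and follows essentially the same route as the paper: bound the one-step overshoot from the region $\{|x|<\rho(|e|)+R\}$ via items \ref{item:SP-ISS-VSR_lemma_3_zero})--\ref{item:SP-ISS-VSR_lemma_2_bound}), manufacture a universal $\Kinf$ gain from that overshoot, establish forward invariance of two nested Lyapunov sublevel sets, and run a comparison argument on $V(x_{k+1})\le V(x_k)-T_k(\alpha_3\comp\alpha_2^{-1})(V(x_k))$ to produce the $\KL$ transient. For your acknowledged ``main obstacle'' the paper's device is to linearly interpolate $V(x_k)$ between sampling instants, obtaining a continuous $y(t)$ with $\dot y\le -(\alpha_3\comp\alpha_2^{-1})(y)$, and then invoke Lemma~4.4 of Lin--Sontag--Wang \cite{LinSontagWangSIAM96}; this delivers $\tilde\beta$ depending only on $\alpha_3\comp\alpha_2^{-1}$ exactly as you need.

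Two minor points. Your ``w.l.o.g.\ $C(M,E)\to0$'' is precisely what the paper proves as a separate claim by taking the actual supremum of $|\bar F|$ over the bad region (rather than the bounding function $C$) and using item~\ref{item:SP-ISS-VSR_lemma_1_cont}) to force right-continuity at $0$; your shortcut is fine but hides this work. More importantly, your final step (``rerun with $E$ replaced by $\sup_{0\le i\le k-1}|e_i|$'') is shakier than the paper's: when $E$ shrinks, your auxiliary $R',M',V,\tilde T$ all change, and it is not clear the original $T^\star$ still suffices. The paper avoids this by carrying $\|\{e_i\}\|$ (not $E$) through the invariance and overshoot estimates, so the bound already reads $\gamma(\|\{e_i\}\|)$ before causality is invoked; you should do the same.
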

%=====================================================================
% PROOF: OF THEOREM SP-ISS-VSR
%=====================================================================
\begin{IEEEproof}
We aim to prove that there exist $\beta \in \KL$ and $\gamma \in \K_\infty$ such that for all $M_0>0$, $E_0>0$ and $R_0>0$ there exists $T^\star>0$ such that for all $\{T_i\}\in \Phi(T^\star)$, $|x_0|\leq M_0$, $\|\{e_i\}\|\leq E_0$ and $k\in \N_0$,
the solutions of \eqref{eq:system1} satisfy
\begin{align}
|x_k|\leq \beta\left(|x_0|,\sum_{i=0}^{k-1}T_i\right)+\gamma\left(\sup_{0\le i \le k-1}|e_i| \right )+R_0.
\end{align}
Consider $\rho\in\K$ from \ref{item:SP-ISS-VSR_lemma_4}). Define, $\forall s \geq0$,  $\forall r \geq0$,
\begin{align}
  \X_1(s,r)&:= \{x\in\R^n: |x|\leq \rho(s)+r \} \label{eq:X1} \\
  \E(s)&:= \{e\in\R^n: |e|\leq s \}  \label{eq:E}   \\
  \bar T(s,r) &:= \min\Big\{ \check{T}(\rho(s)+s,s), \check{T}(\rho(r)+r,r), \hat T, \mathring T \Big\} \label{eq:barT}\\
  \label{eq:setS}
  \S(s,r) &:= \X_1(s,r)\times\E(s)\times(0,\bar T(s,r))\\
  \sigma(s,r) &:=\sup_{(x,e,T) \in \S(s,r)} |\bar F(x,e,T)|. \label{eq:1111} 
\end{align}
From \eqref{eq:X1}--\eqref{eq:E}, we have $\X_1(0,0)=\{0\}$ and $\E(0)=\{0\}$. From assumptions~\ref{item:SP-ISS-VSR_lemma_3_zero})--\ref{item:SP-ISS-VSR_lemma_1_cont}), then $\bar T(0,0) > 0$ and $\sigma(0,0)=0$. 
Given that $\check T(\cdot,\cdot)$ is nonincreasing in each variable, note that
$\check T (\rho(s)+r,s) \geq \min \{\check T (\rho(s)+s,s),\check T (\rho(r)+r,r) \}$,
then $\bar T(s,r) \leq \check T (\rho(s)+r,s)$.
Defining $\zeta(s):=\sigma(s,s)$, from \eqref{eq:1111} we have
\begin{align}
\sigma(s,r) &\leq \zeta(s)+\zeta(r). \label{eq:bounding}
\end{align}

\begin{claim}
\label{clm:rightcontinuous}
There exists $\chi \in \K_\infty$ such that $\chi\geq\zeta$.
\end{claim}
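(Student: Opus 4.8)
The plan is to produce a $\chi\in\K_\infty$ dominating $\zeta$ by first collecting three elementary properties of $\zeta$ and then regularizing. I would first record what is already available just above the claim, namely $\zeta(0)=\sigma(0,0)=0$. Next I would show $\zeta$ is finite, indeed locally bounded: for a fixed $s\ge0$ set $M_s:=\rho(s)+s$ and $E_s:=s$; since $\bar T(s,s)\le\check T(\rho(s)+s,s)=\check T(M_s,E_s)$, every $(x,e,T)\in\S(s,s)$ satisfies $|x|\le M_s$, $|e|\le E_s$ and $T\in(0,\check T(M_s,E_s))$, so assumption~\ref{item:SP-ISS-VSR_lemma_2_bound}) gives $|\bar F(x,e,T)|\le C(M_s,E_s)$, whence $\zeta(s)\le C(\rho(s)+s,s)=:\bar C(s)$ with $\bar C$ nondecreasing. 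Finally I would show $\zeta(s)\to0$ as $s\to0^+$: given $\epsilon>0$, take $\delta(\epsilon)$ from assumption~\ref{item:SP-ISS-VSR_lemma_1_cont}); since $\rho\in\K$ there is $s_0>0$ with $\rho(s)+s\le\delta(\epsilon)$ for $0\le s\le s_0$, and since $\bar T(s,s)\le\hat T$, every $(x,e,T)\in\S(s,s)$ then has $|x|\le\delta(\epsilon)$, $|e|\le s\le\delta(\epsilon)$ and $T\in(0,\hat T)$, so $|\bar F(x,e,T)|<\epsilon$ and hence $\zeta(s)\le\epsilon$.

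Second, I would pass to the running supremum $\bar\zeta(s):=\sup_{0\le\tau\le s}\zeta(\tau)$. It is nondecreasing, satisfies $\bar\zeta\ge\zeta$, $\bar\zeta(0)=0$, $\bar\zeta(s)\le\bar C(s)<\infty$, and $\bar\zeta(s)\to0$ as $s\to0^+$, each property being immediate from the corresponding property of $\zeta$ together with monotonicity of $\bar C$.

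Third, I would define $\chi(0):=0$ and $\chi(s):=\frac1s\int_s^{2s}\bar\zeta(\tau)\,d\tau+s$ for $s>0$, and verify $\chi\in\K_\infty$ and $\chi\ge\zeta$. Because $\bar\zeta$ is nondecreasing, $\bar\zeta(\tau)\ge\bar\zeta(s)$ on $[s,2s]$, so $\chi(s)\ge\bar\zeta(s)+s\ge\bar\zeta(s)\ge\zeta(s)$. Substituting $\tau=sv$ gives $\frac1s\int_s^{2s}\bar\zeta(\tau)\,d\tau=\int_1^2\bar\zeta(sv)\,dv$, which is nondecreasing in $s$ (pointwise in $v\in[1,2]$, since $\bar\zeta$ is nondecreasing), so $\chi$ is strictly increasing; it is continuous on $(0,\infty)$ because $\bar\zeta$, being monotone and locally bounded, is Riemann integrable on compacts, so $s\mapsto\int_s^{2s}\bar\zeta$ is locally Lipschitz, hence continuous, and division by $s>0$ preserves this; it is continuous at $0$ with value $0$ because $0\le\chi(s)\le\bar\zeta(2s)+s\to0$; and $\chi(s)\ge s\to\infty$. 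Hence $\chi\in\K_\infty$ and $\chi\ge\zeta$, as claimed.

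The main obstacle, and the reason the running-sup step is not merely cosmetic, is that $\zeta$ itself is not known to be monotone (nor continuous away from the origin): as $s$ grows the state ball $\X_1(s,s)$ and the error ball $\E(s)$ enlarge while the admissible sampling-period interval $(0,\bar T(s,s))$ shrinks, so $\sigma(s,s)$ may oscillate, and one cannot directly majorize $\zeta$ by a $\K_\infty$ function built from $\zeta$. Replacing $\zeta$ by $\bar\zeta$ restores monotonicity while retaining finiteness and right-continuity at $0$, and the integral averaging then upgrades $\bar\zeta$ to a genuine $\K_\infty$ bound. The residual technicalities — Riemann integrability of a monotone function and continuity of the averaged integral — are routine and would be dispatched in a line.
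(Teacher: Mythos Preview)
Your proof is correct. The three preliminary facts you establish about $\zeta$ (vanishing at $0$, local boundedness via assumption~\ref{item:SP-ISS-VSR_lemma_2_bound}), right-continuity at $0$ via assumption~\ref{item:SP-ISS-VSR_lemma_1_cont})) match the paper's, and your explicit construction $\chi(s)=\frac1s\int_s^{2s}\bar\zeta(\tau)\,d\tau+s$ on top of the running supremum $\bar\zeta$ is a clean and valid way to obtain the $\K_\infty$ majorant.

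The route differs from the paper's in two respects. First, the paper asserts that $\zeta$ itself is nondecreasing ``by \eqref{eq:X1}--\eqref{eq:1111}'' and then invokes Lemma~2.5 of \cite{clarke1998asymptotic} as a black box. You instead refrain from claiming monotonicity of $\zeta$, pass to the running supremum $\bar\zeta$, and build $\chi$ by hand. Your caution is well placed: as you observe, $\S(s,s)$ is not nested in $s$ because the admissible sampling interval $(0,\bar T(s,s))$ shrinks while the state and error balls enlarge, so monotonicity of $\zeta$ is not immediate from the definitions alone. Second, your construction is self-contained and avoids the external reference. What the paper's route buys is brevity; what yours buys is robustness against the monotonicity issue and independence from the cited lemma. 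Either way the claim stands, and your argument is arguably the safer one.
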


\indent\emph{Proof of Claim~\ref{clm:rightcontinuous}:}
Let $\epsilon > 0$ and take $\delta=\delta(\epsilon)$ according to \ref{item:SP-ISS-VSR_lemma_1_cont}). 
Define $\hat \delta:=\min\left\{\frac{\delta}{2},\rho^{-1}(\frac{\delta}{2})\right\}$
(if $\frac{\delta}{2} \notin \text{dom} \ms \rho^{-1}$, just take $\hat\delta=\delta/2$).
Then for all $x\in \X_1(\hat \delta,\hat \delta)$ and $e\in \E(\hat \delta)$ we have $|x|\leq \delta$ and $|e|\leq \delta/2$. 
Given that $|\bar F(x,e,T)| < \epsilon$ for every $|x| \le \delta$, $|e|\leq \delta$ and $T\in (0,\hat T)$ 
this shows that $\lim_{s\to 0^+} \zeta(s) = \sigma(0,0) = 0$.
From \ref{item:SP-ISS-VSR_lemma_2_bound}), 
it follows that $|\bar F(x,e,T)| \le C(s,s)$ for all $|x|\le s$, $|e|\le s$ and
$T\in (0,\check{T}(s,s))$ for every $s\ge 0$. 
From (\ref{eq:barT})--(\ref{eq:1111}) and the fact that
$C(\cdot,\cdot)$ is nondecreasing in each variable, it follows that
$\zeta(s) \le C(s,s) \leq C(\bar s,\bar s)$ for all $s\in [0,\bar s]$ with $\bar s>0$.
Thus, $\zeta(s)< \infty$ for every $s \ge 0$.
We have $\zeta : \R_{\geq0} \rightarrow \R_{\geq0}$, $\zeta(0)=0$, $\zeta$ is right-continuous at zero and,
by  \eqref{eq:X1}--\eqref{eq:1111}, nondecreasing. Then, by Lemma~2.5 of \cite{clarke1998asymptotic},
there exists $\chi \in \K_\infty$ such that $\chi \geq \zeta$.\mer

Define $\eta \in \K_\infty$ and $\tilde \eta \in \K_\infty$ via
\begin{align}
  \label{eq:defEta}
  &\eta := \max \{ \chi , \text{id} \}, \\
  &\tilde  \eta := \max \{ \chi ,\rho \}.
\end{align}

By \eqref{eq:bounding} we have 
\begin{align}
  \sigma(s,r) &\leq   \tilde \eta (s)+\eta(r) \quad \forall s\geq0, \forall r\geq0.
\end{align}

Consider $M_0>0$, $E_0>0$ and $R_0>0$ given and $\alpha_1,\alpha_2,\alpha_3 \in \Kinf$ from \ref{item:SP-ISS-VSR_lemma_4}). 
Select $E=E_0$, $R = \eta^{-1}( \frac{1}{2} \alpha_2^{-1}( \frac{1}{3} \alpha_1(R_0)))$,
and $M = \alpha_1^{-1} \comp \alpha_2\big( \max\{R, M_0, \tilde \eta (E_0)+\eta(R_0)\} \big)$. From (\ref{eq:SP-ISS-VSR_d_1}), 
it follows that $M\ge R$. Also, $R\le R_0$ follows because, 
from (\ref{eq:SP-ISS-VSR_d_1}), $\frac{1}{2}\alpha_2^{-1}(\frac{1}{3}\alpha_1(r)) \le \alpha_2^{-1}(\alpha_1(r)) \le r$ for all $r\ge 0$,
and since $\eta(r) \ge r$ from (\ref{eq:defEta}), then $\eta^{-1}(r) \le r$ for all $r\ge 0$.
Let $M,E,R$ with \ref{item:SP-ISS-VSR_lemma_4}) generate $\tilde T>0$ and $V:\R^n \rightarrow \R_{\geq 0}$ such that (\ref{eq:SP-ISS-VSR_d_1}) and (\ref{eq:SP-ISS-VSR_d_2}) hold. 
Define $T^\star=\min\{\tilde T, \bar T(E,R)\}$ and
\begin{align}
  \X_2(s,r)&:= \{x: V(x)\leq \alpha_2(\tilde \eta(s)+\eta(r)) \}. \label{eq:X2}
\end{align}
Let $x\in\X_1(s,r)$. Then, $\alpha_2(|x|) \le \alpha_2(\rho(s) + r) \le \alpha_2(\tilde \eta(s) + \eta(r))$, 
and using (\ref{eq:SP-ISS-VSR_d_1}), then $V(x) \le \alpha_2 (\tilde \eta(s) + \eta(r))$. Therefore, $\X_1(s,r)\subset \X_2(s,r)$ for all $s\ge 0$ and  $r\ge 0$. Let $x_k$ denote the solution to \eqref{eq:system1} corresponding to $|x_0|\leq M_0$, $\|\{e_i\}\| \le E_0$ and $\{T_i\}\in \Phi(T^\star)$. From (\ref{eq:SP-ISS-VSR_d_1}) we have that $\alpha_2^{-1}(V(x_{k}))\leq |x_k|$; using this in (\ref{eq:SP-ISS-VSR_d_2}) then
\begin{align}
  \label{eq:diffVkp2} 
  V(x_{k+1}) &-V(x_{k}) \leq -T_k\alpha_3 (|x_k|) \le -T_k \alpha(V(x_{k})) \notag \\
             & \quad \text{if} \ms\ms \rho(|e_k|)+R \leq |x_k|\leq M
\end{align}
where $\alpha \dfn \alpha_3 \comp \alpha_2^{-1}$. 

\begin{claim}
\label{clm:bounded}
If $|x_0| \le M_0$ then $|x_k| \le M$ for all $k\in\N_0$.
\end{claim}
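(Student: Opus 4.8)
The plan is to prove the claim by induction on $k$, but using the strengthened hypothesis $V(x_k)\le\alpha_1(M)$ rather than $|x_k|\le M$ directly; by the left‑hand inequality in \eqref{eq:SP-ISS-VSR_d_1} this immediately yields $|x_k|\le M$. The strengthening is what makes the induction close, and recognizing it is the conceptual crux: in the regime where \eqref{eq:SP-ISS-VSR_d_2} applies one obtains only $V(x_{k+1})\le V(x_k)$, and the cruder estimate $V(x_k)\le\alpha_2(|x_k|)\le\alpha_2(M)$ is not enough to recover $|x_{k+1}|\le M$. Tracking the sublevel set $\{x:V(x)\le\alpha_1(M)\}$, which by the definition $M=\alpha_1^{-1}\comp\alpha_2\big(\max\{R,M_0,\tilde\eta(E_0)+\eta(R_0)\}\big)$ coincides with $\{x:V(x)\le\alpha_2(\max\{R,M_0,\tilde\eta(E_0)+\eta(R_0)\})\}$, repairs this.

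For the base case I would note $V(x_0)\le\alpha_2(|x_0|)\le\alpha_2(M_0)\le\alpha_2\big(\max\{R,M_0,\tilde\eta(E_0)+\eta(R_0)\}\big)=\alpha_1(M)$. For the inductive step, assume $V(x_k)\le\alpha_1(M)$, hence $|x_k|\le M$, and split according to the two regimes in \eqref{eq:SP-ISS-VSR_d_2}. If $\rho(|e_k|)+R\le|x_k|$, then since also $|x_k|\le M$, $|e_k|\le E_0=E$ and $T_k<T^\star\le\tilde T$, inequality \eqref{eq:SP-ISS-VSR_d_2} gives $V(x_{k+1})\le V(x_k)-T_k\alpha_3(|x_k|)\le V(x_k)\le\alpha_1(M)$. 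If instead $|x_k|<\rho(|e_k|)+R\le\rho(E_0)+R$, then $x_k\in\X_1(E_0,R)$, $e_k\in\E(E_0)$, and $T_k<T^\star\le\bar T(E,R)=\bar T(E_0,R)$, so $(x_k,e_k,T_k)\in\S(E_0,R)$ and hence $|x_{k+1}|=|\bar F(x_k,e_k,T_k)|\le\sigma(E_0,R)\le\tilde\eta(E_0)+\eta(R)$. Using $R\le R_0$ and $\eta\in\Kinf$, this is at most $\tilde\eta(E_0)+\eta(R_0)\le\max\{R,M_0,\tilde\eta(E_0)+\eta(R_0)\}$, so $V(x_{k+1})\le\alpha_2(|x_{k+1}|)\le\alpha_2\big(\max\{R,M_0,\tilde\eta(E_0)+\eta(R_0)\}\big)=\alpha_1(M)$. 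In either case $V(x_{k+1})\le\alpha_1(M)$, completing the induction and hence the claim.

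The only parts requiring genuine care are the set‑membership bookkeeping that places $(x_k,e_k,T_k)$ in $\S(E_0,R)$ in the second regime --- in particular reading off $T_k<\bar T(E_0,R)$ from the definition $T^\star=\min\{\tilde T,\bar T(E,R)\}$ with $E=E_0$ --- and the two chains of $\Kinf$‑function inequalities, the base‑case $\alpha_2(M_0)\le\alpha_1(M)$ and the inductive $\alpha_2(\tilde\eta(E_0)+\eta(R))\le\alpha_1(M)$, both of which follow directly from the definition of $M$ together with the already‑established bound $R\le R_0$. Everything else is a routine application of \eqref{eq:SP-ISS-VSR_d_1}--\eqref{eq:SP-ISS-VSR_d_2} and the estimate $\sigma(s,r)\le\tilde\eta(s)+\eta(r)$ derived in \eqref{eq:bounding}.
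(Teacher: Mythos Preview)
Your proof is correct and follows essentially the same approach as the paper: induction on $k$ with a strengthened Lyapunov-sublevel invariant, splitting into the decrease regime where \eqref{eq:SP-ISS-VSR_d_2} applies and the small-state regime where the $\sigma$-bound controls the jump. Your invariant $V(x_k)\le\alpha_1(M)$ coincides with the paper's $V(x_k)\le\alpha_2(\max\{M_0,\tilde\eta(E_0)+\eta(R_0)\})$ (since $R\le\tilde\eta(E_0)+\eta(R_0)$ the extra $R$ in your max is redundant), and in the small-state case you work in $\S(E_0,R)$ whereas the paper uses the slightly smaller $\S(|e_k|,R)$, but both choices are valid.
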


\indent\emph{Proof of Claim~\ref{clm:bounded}:}
From (\ref{eq:SP-ISS-VSR_d_1}), $|x_0| \le M_0$ implies that $V(x_0) \le \alpha_2(M_0)$ and $|x_0| \le \alpha_1^{-1} \comp \alpha_2(M_0)$, and by definition of $M$, then $|x_0| \le M$. By induction, we will prove that $V(x_k) \le \alpha_2( \max\{ M_0, \tilde\eta(E_0) + \eta(R_0) \})$ for all $k\in\N_0$. Note that the assertion holds for $k=0$. Suppose that $V(x_k) \le \alpha_2( \max\{ M_0, \tilde\eta(E_0) + \eta(R_0) \})$. 
Then, $|x_k| \le M$. If $x_k \notin \X_1(|e_k|,R)$, then $|x_k| >  \rho(|e_k|)+R$ and from (\ref{eq:diffVkp2}), then $V(x_{k+1}) \le V(x_k)$. If $x_k \in \X_1(|e_k|,R)$, 
from \eqref{eq:1111}--\eqref{eq:bounding}
and the definition
of $\tilde \eta$ and $\eta$ we have $|x_{k+1}| \leq \tilde\eta(|e_k|)+\eta(R)$.
Using (\ref{eq:SP-ISS-VSR_d_1}), then $V(x_{k+1}) \le \alpha_2(\tilde\eta(|e_k|)+\eta(R)) \le \alpha_2(\tilde\eta(E_0)+\eta(R_0))$, and hence the induction assumption holds for $k+1$. Since $V(x_k) \le \alpha_2( \max\{ M_0, \tilde\eta(E_0) + \eta(R_0) \})$ implies that $|x_k| \le M$, we have thus shown that $|x_k| \le M$ for all $k\in\N_0$.\mer

\begin{claim}
\label{clm:sets}
If $x_{\ell}\in \X_2(\|\{e_i\}\|,R)$ for some $\ell\in \N_0$ then $x_k$ remains in $\X_2(\|\{e_i\}\|,R)$ for all $k\geq \ell$.
\end{claim}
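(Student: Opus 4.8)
If $x_\ell \in \X_2(\|\{e_i\}\|,R)$ for some $\ell\in\N_0$, then $x_k$ remains in $\X_2(\|\{e_i\}\|,R)$ for all $k\ge\ell$.

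The plan is to prove this by induction on $k\ge\ell$, showing that $x_k\in\X_2(\|\{e_i\}\|,R)$ implies $x_{k+1}\in\X_2(\|\{e_i\}\|,R)$. Write $E_\infty:=\|\{e_i\}\|$, so that $x\in\X_2(E_\infty,R)$ means $V(x)\le\alpha_2(\tilde\eta(E_\infty)+\eta(R))$. The base case $k=\ell$ is the hypothesis. For the inductive step, assume $V(x_k)\le\alpha_2(\tilde\eta(E_\infty)+\eta(R))$. First I would note that, since $\tilde\eta(E_\infty)+\eta(R)\le\max\{M_0,\tilde\eta(E_0)+\eta(R_0)\}$ (because $E_\infty\le E_0$ and $R\le R_0$, using monotonicity of $\tilde\eta,\eta$), Claim~\ref{clm:bounded} — or rather its proof — already gives $|x_k|\le M$; alternatively this follows directly from (\ref{eq:SP-ISS-VSR_d_1}) and the definition of $M$. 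So the Lyapunov decrease condition (\ref{eq:diffVkp2}) is applicable at step $k$ whenever $\rho(|e_k|)+R\le|x_k|$.

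The step splits into two cases according to whether $x_k$ lies inside the ball $\X_1(|e_k|,R)=\{x:|x|\le\rho(|e_k|)+R\}$. \textbf{Case 1:} $x_k\notin\X_1(|e_k|,R)$, i.e. $|x_k|>\rho(|e_k|)+R$. Combined with $|x_k|\le M$, the hypothesis of (\ref{eq:diffVkp2}) holds, so $V(x_{k+1})\le V(x_k)\le\alpha_2(\tilde\eta(E_\infty)+\eta(R))$, i.e. $x_{k+1}\in\X_2(E_\infty,R)$. \textbf{Case 2:} $x_k\in\X_1(|e_k|,R)$. Here I would use the reachability bound from (\ref{eq:1111})–(\ref{eq:bounding}): since $|x_k|\le\rho(|e_k|)+R$, $|e_k|\le E_\infty$, and $T_k<T^\star\le\bar T(E_\infty,R)$ (note $\bar T(\cdot,\cdot)\ge\bar T(E_\infty,R)$ only after checking $T^\star\le\bar T(E,R)=\bar T(E_0,R)\le\bar T(E_\infty,R)$ via monotonicity, which is where the definitions of $\S$ and $\bar T$ are used), the triple $(x_k,e_k,T_k)$ lies in $\S(E_\infty,R)$, hence $|x_{k+1}|=|\bar F(x_k,e_k,T_k)|\le\sigma(E_\infty,R)\le\tilde\eta(E_\infty)+\eta(R)$. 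Applying the left inequality of (\ref{eq:SP-ISS-VSR_d_1}) backwards, $V(x_{k+1})\le\alpha_2(|x_{k+1}|)\le\alpha_2(\tilde\eta(E_\infty)+\eta(R))$, so again $x_{k+1}\in\X_2(E_\infty,R)$. In both cases the induction propagates, proving the claim.

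The main obstacle is bookkeeping the arguments of the nonincreasing function $\bar T$ and the nondecreasing functions $\tilde\eta,\eta,\rho$: one must be careful that $T^\star=\min\{\tilde T,\bar T(E,R)\}$ with $E=E_0$ really does satisfy $T_k<\bar T(\|\{e_i\}\|,R)$ for the actual noise sequence, which requires $\|\{e_i\}\|\le E_0$ together with the monotonicity of $\bar T$ in its first argument, and similarly that $\X_1(\|\{e_i\}\|,R)\subset\X_1(|e_k|,R)$ is \emph{not} what is needed — rather one needs $|x_k|\le\rho(|e_k|)+R$ pointwise in $k$, which is exactly the Case 2 hypothesis. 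Everything else is a routine monotonicity chase through $\alpha_1,\alpha_2$ and the definitions in (\ref{eq:X1})–(\ref{eq:X2}).
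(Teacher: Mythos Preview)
Your proof is correct and follows essentially the same two-case induction as the paper's own argument; the only cosmetic difference is that you split on $x_k\in\X_1(|e_k|,R)$ whereas the paper splits on $x_k\in\X_1(\|\{e_i\}\|,R)$, and you make the $\bar T$-monotonicity bookkeeping explicit where the paper leaves it implicit. (Minor slip: in Case~2 you invoke the \emph{right} inequality $V(x)\le\alpha_2(|x|)$ of (\ref{eq:SP-ISS-VSR_d_1}), not the left one.)
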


\indent\emph{Proof of Claim~\ref{clm:sets}:}
Let $x_{\ell} \in \X_2(\|\{e_i\}\|,R)$.
If $x_{\ell} \notin \X_1(\|\{e_i\}\|,R)$, then $|x_\ell| >  \rho(\|\{e_i\}\|)+R$. Consequently, if $x_{\ell} \in \X_2(\|\{e_i\}\|,R) \setminus \X_1(\|\{e_i\}\|,R)$, from \eqref{eq:diffVkp2} it follows that
\begin{align}
V(x_{\ell+1}) &\leq V(x_{\ell})-T_{\ell} \alpha(V(x_{\ell}))  \notag 
 \leq V(x_{\ell})  
\end{align}
and hence $x_{\ell+1} \in \X_2(\|\{e_i\}\|,R)$. 
Next, consider that  $x_{\ell} \in \X_1(\|\{e_i\}\|,R)$.
From \eqref{eq:1111} and the definition
of $\tilde \eta$ and $\eta$ we have $|x_{\ell+1}| \leq \tilde \eta(\|\{e_i\}\|)+\eta(R)$.
Using (\ref{eq:SP-ISS-VSR_d_1}) and recalling \eqref{eq:X2}, then $x_{\ell+1} \in \X_2(\|\{e_i\}\|,R)$. By induction, we have thus shown that if $x_\ell \in \X_2(\|\{e_i\}\|,R)$ for some $\ell\in\N_0$, then $x_k \in \X_2(\|\{e_i\}\|,R)$ for all $k\ge \ell$.\mer

Define $t_k=\sum_{i=0}^{k-1} T_i$ and the function
\begin{align*}
  y(t):=V(x_{k})+ &\frac{t-t_k}{T_{k}} \left(V(x_{k+1})-V(x_{k}) \right) 
                  &\text{if } t \in \left[t_k,t_{k+1}\right),
\end{align*}
which depends on the initial condition $x_0$, on the sampling period sequence
$\{T_i\}$, on the disturbance sequence $\{e_i\}$, and on the given constants $M_0,E_0,R_0$ (through the fact that $V$ depends on the latter constants).
Then, 
\begin{align}
\label{eq:doty} 
\dot{y}(t)= \frac{V(x_{k+1})-V(x_k)}{T_k}  \quad \forall t \in \left(t_k,t_{k+1}\right), \forall k \in \N_0.
\end{align}
Note that
\begin{align}
\label{eq:yProperty11}
y(t_k) &= V(x_{k}), \quad \forall k \in \N_0.
\end{align}
By Claim~\ref{clm:bounded},
for all $|x_0|\leq M_0$ 
and all $t_k$ such that
$x_k \notin \X_2(\|\{e_i\}\|,R)$,
\eqref{eq:diffVkp2} holds. Combining
\eqref{eq:diffVkp2} with 
\eqref{eq:doty}  it follows that for all $t\in(t_{k},t_{k+1})$
\begin{align}
\label{eq:yProperty2}
\dot{y}(t) &\leq -\alpha \left(y(t_k) \right) 
\leq -\alpha\left(y(t) \right).
\end{align}
Hence (\ref{eq:yProperty2}) holds up to $t=t_{k^*}$ where $t_{k^*}=\inf \{t_k: x_k \in \X_2(\|\{e_i\}\|,R) \}$.
Note that the function $\alpha = \alpha_3 \comp \alpha_2^{-1}$ does not depend on any of the following quantities: $x_0$, $\{T_i\}$, $\{e_i\}$, $M_0$, $E_0$, or $R_0$. Since $\alpha$ is positive definite, using Lemma~4.4 of \cite{LinSontagWangSIAM96}, then there exists $\beta_1 \in \KL$ such that,
for all $t\in[0,t_{k^*})$ we have
\begin{align*}
y(t) \leq \beta_1\left(y(0),t\right). 
\end{align*}
By \eqref{eq:yProperty11}, for every $k\in \N_0$ such that $x_j \notin \X_2(\|\{e_i\}\|,R)$ for all $0\leq j \leq k$,
we have that
\begin{align}
\label{eq:54}
 V(x_k) \leq \beta_1\left(V(x_0),\sum_{i=0}^{k-1} T_i\right).
\end{align}
From Claim~\ref{clm:sets} and \eqref{eq:X2}, if $x_{j} \in  \X_2(\|\{e_i\}\|,R)$ 
then $V(x_j)\leq \alpha_2(\tilde \eta(\|\{e_i\}\|)+\eta(R))$ for all $k\geq j$.
Combining the latter with \eqref{eq:54}, then, for all $k\in\N_0$,
\begin{align*}
 V(x_k) &\leq \beta_1\left(V(x_0),\sum_{i=0}^{k-1} T_i\right)+  \alpha_2\Big(\tilde \eta(\|\{e_i\}\|)+\eta(R)\Big).
\end{align*}
Using the fact that $\chi(a+b) \le \chi(2a)+\chi(2b)$ for every $\chi\in\K$
and (\ref{eq:SP-ISS-VSR_d_1}) then
\begin{align*}
  \alpha_1(|x_k|) &\leq \beta_1\left(\alpha_2(|x_0|),\sum_{i=0}^{k-1} T_i\right)+   \alpha_2(2\tilde \eta(\|\{e_i\}\|))+ \alpha_2(2\eta(R)).
\end{align*}
Define $\beta \in \KL$ via
$\beta(s,\tau):=\alpha_1^{-1}(3\beta_1(\alpha_2(s),\tau))$
and $\gamma \in \K_\infty$ via $\gamma(s):= \alpha^{-1}_1(3\alpha_2(2\tilde \eta(s)))$.
Recalling the definition of $R$ 
then $\alpha^{-1}_1(3\alpha_2(2\eta(R)))\leq R_0$, it follows that
\begin{align}
|x_k| &\leq 
 %  &
   \beta\left(|x_0|,\sum_{i=0}^{k-1} T_i\right)+\gamma(\|\{e_i\}\|)+R_0
\end{align}
for all $k \in \N_0$, all $\{T_i\}\in \Phi(T^\star)$, all $|x_0|\leq M_0$ and all $\|\{e_i\}\|\leq E_0$. 
We have thus established that \eqref{eq:system1} is SP-ISS-VSR.
\end{IEEEproof}

\section{Examples}
\label{sec:example}

Consider the nonlinear continuous-time plant in Example~1 of \cite{NesicSCL99}:
\begin{align}
  \label{eq:cont_plant}
  \dot x &= x^3+u =: f(x,u)
\end{align}
whose Euler (approximate) discrete-time model is 
\begin{align}
\label{eq:discrete_aprox_plant}
  x_{k+1} &= x_k+T_k \left ( x_k^3+ u_k\right ) =: F^a(x_k,u_k,T_k).
\end{align}
This open-loop Euler model $F^a$ is consistent (as per Definition~\ref{def:consistent}) with the open-loop exact model $F^e$. This can be established by means of Lemma~\ref{lemma:sufcondfor_consistency_f} as follows.
Since $F^a$ coincides with $F^{\text{Euler}}$, then assumption~\ref{item:sufcondconsistency_consistentEuler}) of
Lemma~\ref{lemma:sufcondfor_consistency_f} holds; also, assumption~\ref{item:fbounded}) of Lemma~\ref{lemma:sufcondfor_consistency_f} is easily shown to hold using $f$ in (\ref{eq:cont_plant}).

We will consider two of the feedback laws considered in \cite{NesicSCL99}:
\begin{align}
  \label{eq:ex_controllaw}
  U(x_k,T_k) &= - x_k - 3x_k^3,\\
  \label{eq:ex_controllaw2}
  W(x_k,T_k) &= - x_k - x_k^3.
\end{align}
In \cite{NesicSCL99}, both control laws were shown to achieve semiglobal-practical stabilization under zero-order hold and uniform sampling. 
Our aim here is to show that when state measurement errors are taken into consideration, the control law (\ref{eq:ex_controllaw}) achieves SP-ISS-VSR whereas (\ref{eq:ex_controllaw2}) does not, not even under uniform sampling.

\subsection{SP-ISS-VSR}

Under the feedback law (\ref{eq:ex_controllaw}) and taking measurement errors into account so that $u_k = U(x_k+e_k,T_k)$, the closed-loop approximate model is given by (\ref{eq:system1}), with
\begin{align}
  \label{eq:Fa_example}
  \bar F^a(x,e,T)= x - T [2 x^3 + 9ex^2 + (9e^2 + 1)x + 3e^3 + e].
\end{align}
We next prove that $(U,F^a)$ is MSEC with $(U,F^e)$ by means of Lemma~\ref{lemma:sufcondconsistency}. Assumption~\ref{item:sufcondconsistency_consistentEuler}) has been already established, whereas \ref{item:sufcondconsistency_controllocbound}) is easily shown using (\ref{eq:ex_controllaw}). In order to establish \ref{item:sufcondconsistency_Fa_nondecreasing_bound}), we use (\ref{eq:Fa_example}) to evaluate
\begin{align*}
  |\bar F^a(x,e,T)-\bar F^a(z,e,T)| &= 
   | 1 - T P(x,e,z) |\ |x-z|\\
  &\le (1 + T |P(x,e,z)|) |x-z|,
\end{align*}
where $P(x,e,z)$ denotes a multinomial in the indeterminates $x,e,z$, and we have used the fact that for every positive integer $p$, $x^p -z^p = q(x,z)\cdot(x-z)$ for some multinomial $q(x,z)$. Consider compact sets $\X,\E\subset\R$, and define the nonnegative constant $\bar\sigma := \sup_{(x,e,z) \in (\X,\E,\X)} |P(x,e,z)|$. Then,
\eqref{eq:lemma_sufcondconsistency_Fa_nondecreasing} holds for all $T>0$ with $\sigma : \R_{\ge 0} \to \R_{\ge 0}$ defined as $\sigma(T) := \bar\sigma$, and assumption~\ref{item:sufcondconsistency_Fa_nondecreasing_bound}) of Lemma~\ref{lemma:sufcondconsistency} holds. By Lemma~\ref{lemma:sufcondconsistency}, $(U,F^a)$ is MSEC with $(U,F^e)$.

Next, we prove that $x_{k+1}=\bar F^a(x_k,e_k,T_k)$ is SP-ISS-VSR.
The continuity and boundedness assumptions
\ref{item:SP-ISS-VSR_lemma_3_zero}), \ref{item:SP-ISS-VSR_lemma_1_cont}) and \ref{item:SP-ISS-VSR_lemma_2_bound})
of Theorem~\ref{lemma:SP-ISS-VSR_Lyap}, 
can be easily verified from (\ref{eq:discrete_aprox_plant}),
(\ref{eq:ex_controllaw}) and (\ref{eq:Fa_example}). 
To prove assumption~\ref{item:SP-ISS-VSR_d}) of Theorem~\ref{lemma:SP-ISS-VSR_Lyap}, let $\alpha_1,\alpha_2,\alpha_3,\rho \in \Kinf$ be defined via $\alpha_1(s) =\alpha_2(s) = s^2$, $\alpha_3(s) = 2.194s^4$, and $\rho(s) = 10 s$. Let $M\ge R>0$ and $E>0$ be given and define $V(x)=x^2$. Note that $V$ satisfies (\ref{eq:SP-ISS-VSR_d_1}) and compute
\begin{align}
  \label{eq:Lyap_1}
  &V(\bar F^a(x,e,T))-V(x) = h(x,e)T+g(x,e)T^2, \\
  &h(x,e) = -2x[2 x^3 + 9ex^2 + (9e^2 + 1)x + (3e^3 + e)],\\
  &g(x,e) = [2 x^3 + 9ex^2 + (9e^2 + 1)x + (3e^3 + e)]^2.
\end{align}
Define $G := \sup_{|x|\le M,|e|\le E} g(x,e)$ and let $\tilde T = \frac{1.8R^2}{G}$.
Expanding $h(x,e)$ and taking absolute values on sign indefinite terms,
\begin{align}
  h(x,e) &\leq - 4x^4 -(18e^2 + 2) x^2 + 18|e||x|^3  + ( 6|e|^3 + 2|e|)|x| \notag\\
  \label{eq:h(x,e)}
  &\le - 4x^4 - 2 x^2 + 18|e||x|^3  + ( 6|e|^3 + 2|e|)|x|.
\end{align}
Whenever $\rho(|e|) \le |x|$, it follows that $|e| \le 0.1|x|$ and
\begin{align*}
  h(x,e) &\leq -4x^4 - 2 x^2 + 1.8 x^4 + (0.006 |x|^3 + 0.2 |x|)|x| \\
  &\le -2.194 x^4 - 1.8 x^2.
\end{align*}
It thus follows that for $\rho(|e|) + R \le |x| \le M$ and $T\in (0,\tilde T)$,
\begin{align}
  V(\bar F^a(x,e,T))-V(x) &\leq (- 2.194 x^4 - 1.8 x^2)T + GT^2 \notag \\
                          &\leq -T 2.194 x^4 + \left(- 1.8 x^2 +GT\right)T \notag\\
                          &\le -T \alpha_3(|x|).
\end{align}
Therefore, assumption~\ref{item:SP-ISS-VSR_d}) of Theorem~\ref{lemma:SP-ISS-VSR_Lyap} also holds
and the closed-loop system $x_{k+1} = \bar F^a(x_k,e_k,T_k)$ is SP-ISS-VSR. Theorem~\ref{theorem:ISS_aprox_to_exact} then ensures that the exact closed-loop system is SP-ISS-VSR.

\subsection{Practical stability but no SP-ISS-VSR}

Consider next the feedback law (\ref{eq:ex_controllaw2}), which was also shown in \cite{NesicSCL99} to achieve semiglobal practical stability under zero-order hold and uniform sampling. We next show that, under bounded state measurement errors and also uniform sampling, the true plant state may diverge. Under the feedback law (\ref{eq:ex_controllaw2}), the closed-loop approximate model becomes
\begin{align}
  \label{eq:discrete_aprox_plant_2}
  \bar F^a(x,e,T) &= x - T [3ex^2 + (1+3e^2)x + e^3+e].
\end{align}
Notice the absence of the cubic term in $x$ within the square brackets in (\ref{eq:discrete_aprox_plant_2}) as compared with (\ref{eq:Fa_example}).
Consider the constant error sequence $\{e_i\}$ with $e_i = -1$ for all $i\in\N_0$.
From \eqref{eq:discrete_aprox_plant_2}, then
\begin{align*}
  \bar F^a(x,-1,T) &= x + T [3x^2 -4x + 2].
\end{align*}
The polynomial between square brackets satisfies $3x^2 -4x + 2 \ge x$ for all $x \ge 1$. Therefore, $\bar F^a(x,-1,T) \ge x (1+T)$ whenever $x\ge 1$ and $T>0$. For every $T^\star > 0$, consider the constant sequence $\{T_i\} \in \Phi(T^\star)$ with $T_i = T^\star/2$ for all $i\in\N_0$. It it thus clear that for the selected constant sequences $\{e_i\}$ and $\{T_i\}$, and for $x_0 \ge 1$, we have
\begin{align*}
  \lim_{k\to\infty} x^a(k,x_0,\{e_i\},\{T_i\}) = \infty
\end{align*}
and hence the approximate closed-loop system $x_{k+1} = \bar F^a(x_k,e_k,T_k)$ is not SP-ISS-VSR. However, the pair $(W,F^a)$ is indeed MSEC with $(W,F^e)$, as can be shown following identical steps to those in the previous example. Using Lemma~\ref{LEMA3}, it can be shown that the exact closed-loop system cannot be SP-ISS-VSR, either.

\section{Conclusions}
\label{sec:conclusions}

We have given stability results for digital control design based on discrete-time approximate
models under varying sampling rate (VSR) and in the presence of state measurement errors.
We have extended the concept of semiglobal practical ISS (SP-ISS) to the VSR case (SP-ISS-VSR) 
and introduced the concept of multi-step error consistency (MSEC). 
We have shown that if the approximate closed-loop model is MSEC with the exact one, 
and if the control law renders the approximate model SP-ISS-VSR, then the same controller
ensures SP-ISS-VSR of the exact discrete-time closed-loop model. We have also given sufficient
conditions for MSEC and derived Lyapunov-based conditions that guarantee SP-ISS-VSR of a discrete-time model.
All of the given conditions are checkable without assuming knowledge of the exact discrete-time model.
We have also illustrated application via numerical examples.

\appendix

\begin{IEEEproof}[Proof of Lemma~\ref{lemma:sufcondfor_consistency_f}]
  Let $\Omega \subset \R^n \times \R^m$ be a given compact set. Let $\X_1 \subset \R^n$ and $\U \subset \R^m$ be compact sets such that $\Omega\subset\X_1 \times \U$, and define $\X:=N(\X_1,1)$. Let $\rho' \in \K$ and $T_0' > 0$ be given by Definition~\ref{def:consistent} in correspondence with $\Omega$ by the fact that $F^a$ is consistent with $F^{\text{Euler}}$. Let hypothesis \ref{subitem:f_bounded_a}) generate $\rho \in \K$ and $M>0$ in correspondence with $\X$ and $\U$. Define $T_0 := \min\{T_0',1/M\}$. Let $(x_0,u)\in\Omega$ and let $\phi_u(t,x_0)$ denote the (unique) solution to 
\begin{align*}
  \dot{x}(t) &= f(x(t),u), \quad x(0)=x_0, 
\end{align*}
where $u$ is constant, so that
\begin{align*}
  \phi_u(t,x_0) &= x_0 + \int_0^t f(\phi_u(s,x_0),u) ds.
\end{align*}
From hypothesis \ref{subitem:f_bounded_a}), then $|\phi_u(t,x_0) - x_0|\leq M t\le 1$, and hence $x(t)\in \X$, holds for all $t\in [0,T_0]$. From hypothesis \ref{subitem:f_Lypschitz_b}), it follows that for all $T\in [0,T_0]$,
\begin{multline*}
  \left|\int_0^T \left[f(\phi_u(t,x_0),u)-f(x_0,u)\right]dt \right|\\ \leq \int_0^T \rho(|\phi_u(t,x_0)-x_0|)dt \leq T \rho(MT). 
\end{multline*}
Considering that $F^e(x_0,u,T) = x_0 + \int_0^T f(\phi_u(t,x_0),u) dt$, that $F^{\text{Euler}}(x_0,u,T) = x_0 + \int_0^T f(x_0,u) dt$, and that $F^a$ is consistent with $F^{Euler}$, then, for all $T\in [0,T_0]$, it follows that
\begin{align*}
|F^e(x_0,u,T)&-F^a(x_0,u,T)| \le |F^{\text{Euler}}(x_0,u,T) - F^a(x_0,u,T)|\\
  &\hspace{5mm} + |F^e(x_0,u,T) - F^{\text{Euler}}(x_0,u,T)|\\
%&
&\le T \rho'(T) +\left|\int_0^T \left [ f(\phi_u(t,x_0),u)-f(x_0,u) \right]dt\right| \\
&\leq T\rho'(T)+T\rho(MT) =: T\rho_1(T)
\end{align*}
where $\rho_1 \in \K$. Thus, $F^e$ is consistent with $F^a$.
\end{IEEEproof}

%======================================
% LEMMA:  PROOF
%======================================
\begin{IEEEproof}[Proof of Lemma~\ref{LEMA3}]
  Define $\tilde \X:=N(\X,\eta)$. Since $(U,F^a)$ is MSEC with $(U,F^e)$, in correspondence with the tuple $(\tilde \X,\E,L,\eta)$ there exist $\alpha : \R_{\geq0} \times \R_{\geq0} \rightarrow \R_{\geq0}$  and $T^*>0$ such that
\eqref{eq:mscee1} and \eqref{eq:mscee2} hold according to Definition~\ref{def:MSEC}. Let $\tilde T := T^*$ and consider $\{T_i\} \in \Phi(\tilde T)$, $\{e_i\} \subset \E$, and $\xi \in \X$. Note that (\ref{eq:MSEC2}) holds trivially for $k=0$ because
\begin{align*}
  |\Delta x_0| &= |\xi-\xi| = 0 \le \eta.
\end{align*}
We proceed by induction on $k$. Let $k\ge 0$ be such that $\sum_{i=0}^{k} T_i \in [0,L]$. Note then that $\sum_{i=0}^{j} T_i \in [0,L]$ for every $0 \le j \le k$. Suppose also that $x^a(j,\xi,\{e_i\},\{T_i\}) \in \X$ and that $|\Delta x_j|\leq \eta$, both for all $0 \le j \le k$. Then, $x^e(j,\xi,\{e_i\},\{T_i\}) \in \tilde\X$ for all $0 \le j \le k$. Consider $\Delta x_{k+1}$. From (\ref{eq:MSEC2}), (\ref{eq:system1}) and (\ref{eq:mscee1}), we have
\begin{align*}
  |\Delta x_{k+1}| &= \left|\bar F^e\big(x^e_k,e_k,T_k\big) - \bar F^a\big(x^a_k,e_k,T_k\big)\right| \le \alpha(|\Delta x_k|,T_k) \\ &\le \alpha^{k+1}(|\Delta x_0|,\{T_i\}) = \alpha^{k+1}(0,\{T_i\}) \le \eta.
\end{align*}
We have thus established the result by induction.
\end{IEEEproof}

\bibliographystyle{IEEEtran}
\bibliography{bibliografia}
\end{document}